\long\def\commentbegin #1\commentend{}
\renewcommand{\epsilon}{\varepsilon}
\newcommand{\Exp}[1]{\hbox{{\rm I}\hskip -2pt {\rm E}}\left[#1\right]}
\newcommand{\Prob}[1]{\hbox{\rm I\kern-2pt P}\left[#1\right]}
\DeclareMathAlphabet{\mathsc}{OT1}{cmr}{m}{sc}
\newtheorem{observation}{Observation}
\renewcommand{\geq}{\geqslant}
\renewcommand{\ge}{\geqslant}
\renewcommand{\leq}{\leqslant}
\renewcommand{\le}{\leqslant}
\newcommand{\shortOnly}[1]{\ifthenelse{\boolean{short}}{#1}{}}
\newcommand{\onlyShort}[1]{\ifthenelse{\boolean{short}}{}{#1}}
\newcommand{\longOnly}[1]{\ifthenelse{\boolean{short}}{}{#1}}
\newcommand{\onlyLong}[1]{\ifthenelse{\boolean{short}}{}{#1}}
\def\billy#1{{\color{green}\underline{\textsf{Billy:}}} {\color{blue} \emph{#1}}}
\def\gopal#1{{\color{red}\underline{\textsf{Gopal:}}} {\color{blue} \emph{#1}}}
\def\david#1{{\color{orange}\underline{\textsf{David:}}} {\color{blue} \emph{#1}}}
\def\shay#1{\hl{shay: #1}}
\def\billy#1{}
\def\gopal#1{}
\def\david#1{}
\def\shay#1{}
\def\inline#1:{\par\vskip 7pt\noindent{\bf #1:}\hskip 10pt}
\def\candidateSTATE{\textsf{Candidate}}
\def\REQUEST{\textsc{request}}
\def\APPROVED{\textsc{approved}}
\def\WINS{\textsc{wins}}
\def\electedSTATE{\textsf{Elected}}
\def\LEADER{\textsc{leader}}
\def\nonelectedSTATE{\textsf{Non-elected}}
\def\DECLINED{\textsc{declined}}
\def\LOSES{\textsc{loses}}
\def\DECIDE{\textsc{decide}}
\def\numphaserand{K}
\def\Chosen{\textsf{Chosen}}
\def\Contender{\textsf{Contender}}
\def\CANDSTATE{\textsf{CAND-STATE}}
\def\REFSTATE{\textsf{REF-STATE}}
\def\ID{RANK}
\def\PH{PH}
\def\RECORD{{\mathcal{P}}}
\def\hht{{\hat t}}
\def\WINNER{\textsc{Winner}}
\renewcommand{\geq}{\geqslant}
\renewcommand{\ge}{\geqslant}
\renewcommand{\leq}{\leqslant}
\renewcommand{\le}{\leqslant}
\long\def\hide #1\hideend{}
\newcommand{\squishlist}{
 \begin{list}{$\bullet$}
  { \setlength{\itemsep}{0pt}
     \setlength{\parsep}{3pt}
     \setlength{\topsep}{3pt}
     \setlength{\partopsep}{0pt}
     \setlength{\leftmargin}{1.5em}
     \setlength{\labelwidth}{1em}
     \setlength{\labelsep}{0.5em} } }
\newcommand{\squishlisttwo}{
 \begin{list}{$\bullet$}
  { \setlength{\itemsep}{0pt}
     \setlength{\parsep}{0pt}
    \setlength{\topsep}{0pt}
    \setlength{\partopsep}{0pt}
    \setlength{\leftmargin}{2em}
    \setlength{\labelwidth}{1.5em}
    \setlength{\labelsep}{0.5em} } }
\newcommand{\squishend}{
  \end{list}  }
\title{Singularly Optimal Randomized Leader Election}
\author{Shay Kutten}{Faculty of Industrial Engineering and Management, Technion - Israel Institute of Technology, Haifa, Israel}{kutten@technion.ac.il}{0000-0003-2062-6855}{This work was supported in part by the Bi-national Science Foundation (BSF) grant 2016419.}
\author{William K. Moses Jr.}{Faculty of Industrial Engineering and Management, Technion - Israel Institute of Technology, Haifa, Israel}{wkmjr3@gmail.com}{0000-0002-4533-7593}{This work was supported in part by the BSF grant 2016419 and in part by a Technion fellowship.}
\author{Gopal Pandurangan}{Department of Computer Science, University of Houston, Houston, TX, USA}{gopal@cs.uh.edu}{0000-0001-5833-6592}{G. Pandurangan was supported, in part, by NSF grants CCF-1527867, CCF-1540512,  IIS-1633720,  CCF-1717075, and BSF grant 2016419.}
\author{David Peleg}{Department of Computer Science and Applied Mathematics, Weizmann Institute of Science, Rehovot, Israel}{david.peleg@weizmann.ac.il}{0000-0003-1590-0506}{Supported in part by the US-Israel Binational Science Foundation grant 2016732.}
\authorrunning{S. Kutten, W.\,K. Moses Jr., G. Pandurangan, and D. Peleg}
\keywords{Leader election, Asynchronous systems, Randomized algorithms, Singularly optimal, Complete networks}
\begin{document}
\maketitle

\begin{abstract}
This paper concerns designing  distributed algorithms that are {\em singularly optimal}, i.e., algorithms that are {\em simultaneously}  time and message {\em optimal}, for  the  fundamental  leader election problem in networks. Our main result is a randomized distributed leader election algorithm for {\em asynchronous complete} networks  that is essentially (up to a polylogarithmic factor) singularly optimal.  Our   algorithm  uses $O(n)$  messages with high probability\footnote{Throughout, ``with high probability'' means with probability at least $1-1/n^c$, for a constant $c > 0$.} and runs in $O(\log^2 n)$ time (with high probability) to elect a unique leader. The $O(n)$ message complexity should be contrasted with the $\Omega(n \log n)$ lower bounds for the deterministic message complexity of leader election algorithms (regardless of time), proven by Korach, Moran, and Zaks (TCS, 1989) for asynchronous algorithms and by Afek and Gafni (SIAM J. Comput., 1991) for synchronous networks. Hence, our result also separates the message complexities of randomized and deterministic  leader election. More importantly, our (randomized) time complexity of $O(\log^2 n)$ for obtaining the optimal $O(n)$ message complexity is significantly smaller than the long-standing $\tilde{\Theta}(n)$ time complexity obtained by Afek and Gafni and by Singh (SIAM J. Comput., 1997)  for message optimal (deterministic) election in asynchronous networks. Afek and Gafni also  conjectured that $\tilde{\Theta}(n)$ time would be optimal for message-optimal  asynchronous algorithms. Our result shows that randomized algorithms are significantly faster. 

Turning to {\em synchronous} complete networks, Afek and Gafni showed an essentially singularly optimal deterministic algorithm with $O(\log n)$ time and $O(n \log n)$ messages. Ramanathan et al. (Distrib. Comput. 2007) used randomization to improve the message complexity, and showed a randomized algorithm with $O(n)$ messages but still with $O(\log n)$ time (with failure probability $O(1 / \log^{\Omega(1)}n)$). Our second result shows that synchronous complete networks admit a {\em tightly} singularly optimal randomized algorithm, with $O(1)$ time and $O(n)$ messages (both bounds are optimal). Moreover, our algorithm's time bound holds with certainty, and its message bound holds with high probability, i.e., $1-1/n^c$ for constant $c$. 

Our results demonstrate that leader election can be solved in a simultaneously message and time-efficient manner  in asynchronous complete networks using randomization. It is open whether  this is possible in  asynchronous general networks.    
\end{abstract}

\section{Introduction}
\label{sec:intro}

Leader election is a classical and  fundamental problem in distributed computing with numerous applications; see e.g., \cite{Lann77-DistSystems, GallagerHS1983,AG91,humblet-clique,KorachKuttenMoran-ModularLE-TOPLAS,KorachOptimalTrees87,KorachOptimal89,KPPRT15tcs,singh,KPPRT15jacm,sense-of-dir,LouiMW88,singh97,KhanKMPT08, Lyn96,peleg-jpdc,santoro-book,GerardTelDistributedAlgosBook,KorachPODC1984,Pet84,AngluinSTOC80,KrishnaRamanathan:randomized,Gupta-ProbLE-DISC,dmitry,AM94}. 
The goal is to select a unique node, called the {\em leader}, from a set of nodes. An {\em arbitrary} subset of nodes can {\em wake up spontaneously at arbitrary times} and start the election algorithm by sending messages over the network. When the algorithm terminates, a {\em unique} node $v$ must be elected as leader and be {\em known to all nodes}.

Election is important both theoretically and because of the multiple applications such as implementing databases and data centers \cite{isard2007autopilot,chandra2007paxos,TV07}, locks \cite{27897}, file servers \cite{ghemawat2003google,chang2008bigtable}, broadcast and multicast \cite{perlman2000interconnections,castro2002scribe},
 and virtually every global task \cite{awerbuch1990communication}. In many of these applications, the network is treated
 as being virtually complete.
With the advent of large-scale  and resource-constrained networks such as peer-to-peer systems~\cite{KSSV06,KSS06,soda12,Ratnasamy01CAN,Rowstron01Pastry,Zhao04Tapestry} and ad hoc and sensor networks (e.g., \cite{sensor2,sensor}), it is often desirable to achieve low cost and scalable leader election.  Leader election has been studied extensively over the years in various distributed computing models  starting with the standard CONGEST model of networks (in particular, ring, complete networks, and arbitrary networks), radio networks, peer-to-peer networks,  population protocols, and programmable matter to name a few.

Our goal in this paper is to design  leader election algorithms in distributed networks
such that each is efficient with respect to {\em both of}
the two fundamental complexity measures, namely {\em messages} and {\em time}. 
Unfortunately, designing distributed network algorithms that are {\em simultaneously} time- and message-efficient has proved to be a challenging  task.
Consequently, research in the last three decades  has focused mainly on optimizing either one of the two measures 
separately, typically at the cost of neglecting the other.  There has been significant recent progress in obtaining algorithms that are essentially optimal in both measures (or at least work well under both measures  to the extent possible) for various problems such as leader election, minimum spanning tree and shortest paths \cite{KPPRT15jacm,PRS17,elkin2017simple, haeupler2018round}.
In particular, as defined in \cite{PRS17} (see also \cite{gmyr}), the following two notions are of interest for any given problem:

\noindent  --- {\bf Singular optimality:} 
A problem enjoys {\em singular optimality} if it has a distributed algorithm that is optimal (or at least optimal up to a polylogarithmic factor) with respect to both measures simultaneously.
\\
\noindent --- {\bf Time-message trade-off:} 
A problem exhibits a {\em time-message trade-off} if it fails to admit a singularly optimal solution, namely, algorithms of better time complexity for it necessarily incur higher message complexity and vice versa. 

The singularly optimal results mentioned earlier for leader election, minimum spanning tree,
and shortest paths \cite{KPPRT15jacm,PRS17,elkin2017simple,haeupler2018round}, crucially
apply only to {\em synchronous} networks.
A main motivation for this work is to study whether leader election admits singularly optimal algorithms or exhibits a time-message tradeoff in {\em asynchronous}  networks. 
Unlike synchronous  networks which admit a leader election algorithm  which is singularly optimal (up to a logarithmic factor) \cite{KPPRT15jacm}, it is not known whether  asynchronous networks admit such an algorithm (see ``Background and Prior Work'').

In this paper, we focus on the classical problem of leader election in {\em complete} networks, which itself has been studied extensively for nearly four decades.
In such a network, every pair of nodes is connected by a bidirectional communication link and in the beginning, no node has any knowledge
of any other nodes, including their identities (if any). Like many prior works (e.g., \cite{AG91, singh, humblet-clique,GallagerHS1983,AM94} and others), this paper focuses on the more challenging {\em asynchronous} model and provides the first-known singularly optimal algorithm. 

It is clear that the best possible time bound for leader election in complete networks is {\em constant} if there is no restriction on the message complexity. The situation with respect to the message complexity is more nuanced. 
(The message lower bound is trivially $\Omega(n)$ since the leader has to be known to all nodes.) 
For {\em deterministic} algorithms, there is a well-known lower bound of
$\Omega(n \log n)$ messages (even in the synchronous setting when an adversary decides which nodes to wake up and when) \cite{AG91,KorachOptimal89,KorachOptimalTrees87} where $n$ is the number of network nodes. 
Moreover, Afek and Gafni \cite{AG91}
show that any deterministic algorithm that is message optimal (i.e., takes $\Theta(n \log n)$ messages) needs $\Omega(\log n)$ time; they present such a deterministic algorithm that takes $O(n \log n)$ messages and $O(\log n)$ time in {\em synchronous} networks. Hence for synchronous networks, there exists a deterministic algorithm that is essentially singularly optimal (up to a logarithmic factor).

The situation is  less clear in the asynchronous setting and for randomized solutions. First, the $\Omega(n \log n)$ message lower bound was shown for deterministic algorithms, and it was not clear whether by using randomization one can breach this bound. Second, Afek and Gafni~\cite{AG91} present an {\em asynchronous} leader election algorithm that takes $O(n \log n)$ messages (which is message optimal), but takes
$O(n)$ time. They conjecture that this is the best possible time bound for message-optimal asynchronous algorithms. Singh \cite{singh} (mildly) disproves this conjecture by presenting
an algorithm that runs in $O(n/\log n)$ time and is also message optimal ($O(n\log n)$ messages). This is still a far cry from being singularly optimal, as the time bound is essentially linear.  When it comes to the use of randomness, Afek and Matias~\cite{AM94} develop a randomized algorithm that succeeds with probability $1 - \epsilon$ and runs in $O(\log n)$ time using $O((n/\epsilon)\log^2 (1/\epsilon))$ messages.\footnote{In the paper, it is claimed that they develop an algorithm with termination detection that runs in $O(\log n)$ time for complete graphs. However, it is unclear if indeed the running time is $O(\log n)$ for an algorithm with termination detection or rather $O(n)$ time when termination detection is required.} If we allow for a constant probability of success, i.e., if we set $\epsilon$ to a fixed constant, then the algorithm runs in $O(\log n)$ rounds and uses $O(n)$ messages. However, for randomized algorithms, it is typically desired that algorithms succeed with high probability. In such a scenario if we set $\epsilon$ to $1/n^c$ for some constant $c>0$, the message complexity significantly increases to $O(n^{1+c} \log^2 n)$.

The results of the current paper (detailed later on) {\em break} the long-standing barrier of $\Omega(n \log n)$ messages for deterministic algorithms and show that there exist singularly optimal randomized asynchronous algorithms that succeed with high probability. 

\smallskip 
\noindent \textbf{Distributed Computing Model.} 
We consider a system represented as an undirected complete graph $G=(V,E)$, $|V|=n$, similar
to the models of~\cite{AG91,humblet-clique,KorachKuttenMoran-ModularLE-TOPLAS,KorachOptimalTrees87,KorachOptimal89}, except that processors can access {\em private unbiased coins}.
Our upper bounds do not require unique identities, and in particular, nodes can be anonymous. 
If nodes do have unique identifiers, then we assume, as in prior works on complete networks (see \cite{AG91}), that nodes initially do not know the unique identifiers of other nodes.\footnote{Otherwise (in the $KT_1$ model \cite{AGVP90}, where each node knows the unique identifiers of other nodes), leader election is trivial in complete networks.}

In the synchronous communication setting, the computation is divided into discrete lockstep time units called {\em rounds}, and every message sent over an edge arrives at the receiver after a fixed delay of one time unit, namely, at the end of the current round.
In contrast, in the {\em asynchronous} communication setting, 
messages sent on edges incur unpredictable but finite delay, in an error-free and FIFO manner (i.e., messages will arrive in sequence).
Nevertheless, for the sake of time analysis it is assumed that a message  takes {\em at most one time unit} to be delivered.
For both modes, we make the usual (and here, very reasonable) assumption that local computation within a node is  instantaneous and free.
We assume the  standard $\mathcal{CONGEST}$ model~\cite{Pel00}, where a node can send at most one $O(\log n)$ bit message on each edge in each round.

Following the standard assumption in asynchronous protocols (see \cite{AG91,GallagerHS1983,singh}),
nodes are initially asleep.
A node enters the execution when it is woken up by the environment (at most once), or upon receiving messages from awakened neighbors.
In the asynchronous setting, once a node enters execution, it performs all the computations required of it by the algorithm, sends out messages to neighbors as specified by the algorithm. 
In the synchronous setting, we assume the above {\em adversarial} wake up assumption as well.\footnote{This should be contrasted with the {\em simultaneous wake up} model, where all nodes are assumed to be awake at the beginning of computation; this is typically assumed in design of synchronous protocols (see e.g., \cite{AG91,KPPRT15tcs,KPPRT15jacm}).}

The message complexity of an algorithm is the worst-case total number of $O(\log n)$ bit messages sent during its execution. 
The time complexity is the worst-case total number of time units since the first node is woken up to the last message transmission due to the algorithm. 
Note that a time unit in the synchronous model corresponds to one round, whereas in the asynchronous model it is an upper bound on the transmission time of a message over an edge.
Hence, it is generally more difficult to design efficient algorithms in asynchronous systems than in synchronous systems.

Following the standard approach to modeling distributed networks (see \cite{AG91}), we represent the environmental uncertainties by means of an {\em adversary} controlling some of the execution parameters. Specifically, we assume an {\em adversarial wake up} mode, where node wake-up is scheduled by an adversary  
(who may decide to keep some nodes dormant).
A node can also be woken up by receiving messages from other nodes. 
In addition to the wake-up schedule, the adversary also decides for how long to delay each message.
These decisions are done {\em adaptively}, i.e., when the adversary makes a decision to wake up a node or delay a message, it has access to the results of all previous coin flips. 
Finally, recalling that initially the nodes are unaware of which neighbor is connected to each of their outgoing edges, the adversary also controls the graph structure (i.e., the mapping of outgoing edges to neighbors).
Here, we consider the adversary to be {\em oblivious}, i.e., if nodes use randomness to choose an outgoing edge, the adversary must choose endpoints for all such outgoing edges prior to the first use of randomness in this way.\footnote{To appreciate the implications of allowing the adversary to construct the graph {\em after} seeing
the random choices, see the lower bound proof techniques for message complexity in~\cite{KorachOptimal89,AG91}. These techniques are for deterministic algorithms, but will work when the adversary has adaptive edge mapping abilities.}

\smallskip 
\noindent \textbf{Background and additional Prior Work.}
The complexity of the leader election problem and (especially deterministic) algorithms for it
have been very well-studied in distributed networks. Various algorithms and lower bounds are known 
in different models with synchronous/asynchronous communication and 
in networks of varying network topologies such as a cycle, a complete graph,
 or some arbitrary topology (e.g., see
\cite{KhanKMPT08,KorachKuttenMoran-ModularLE-TOPLAS,KPPRT15jacm,KPPRT15tcs, Lyn96,peleg-jpdc,santoro-book,GerardTelDistributedAlgosBook} 
and the references therein). The problem was first studied in context of a ring network by Le~Lann~\cite{Lann77-DistSystems} and discussed for general graphs in the influential paper of Gallager, Humblet, and Spira~\cite{GallagerHS1983}. Kutten et al. \cite{KPPRT15jacm} presented a singularly optimal (up to a logarithmic factor) randomized leader election algorithm for general {\em synchronous} networks that ran in $O(D)$ time and used $O(m \log n)$ messages (where $D$, $m$, and $n$ are the network diameter, number of edges, and the number of nodes respectively). We note that $\Omega(D)$ and $\Omega(m)$
are lower bounds for time and messages for leader election even for
randomized algorithms \cite{KPPRT15jacm}. It is not known whether similar bounds can be achieved for general {\em asynchronous} networks, although one can obtain algorithms that are separately time optimal \cite{peleg-jpdc} and message optimal \cite{GallagerHS1983}.

Leader election in the class of 
{\em complete networks} --- which is the focus of this paper --- has come to occupy a special position of its own 
and has been extensively studied~\cite{AG91,AM94,humblet-clique,KorachKuttenMoran-ModularLE-TOPLAS,KorachOptimalTrees87,KorachOptimal89,KPPRT15tcs,singh}; see also \cite{sense-of-dir,LouiMW88,singh97} for leader election in complete networks where nodes have a sense of direction. While $\Omega(n)$
is an obvious lower bound on the message complexity of leader election when the leader's identity should be known for all nodes,
an $\tilde{O}(\sqrt{n})$ (i.e., sublinear) message complexity can be obtained for a related but different problem in {\em synchronous} complete networks with {\em simultaneous
wake up} \cite{KPPRT15tcs} where we {\em do not} require that 
the nodes not elected know who is the leader (nor which of their ports lead to it).\footnote{This variant of leader election is sometimes called ``implicit'', as opposed
to the version studied here, where all nodes need to know the identity of the leader.}
The above result  crucially uses randomization to break the linear message complexity
bound that applies  for deterministic algorithms.

The study of leader election algorithms is usually concerned with both message and time complexity. 
Korach et al.~\cite{KorachPODC1984}, Humblet \cite{humblet-clique},  Peterson \cite{Pet84} and  Afek and Gafni~\cite{AG91} presented $O(n \log n)$ message algorithms for {\em asynchronous} complete networks.  Korach, Kutten, and Moran~\cite{KorachKuttenMoran-ModularLE-TOPLAS} presented a general method plus applications to various classes of graphs including complete networks.  Afek and Matias~\cite{AM94} similarly presented a general method with application to a complete graph.

Afek and Gafni (as a part of presenting a tradeoff between time and message complexity) showed that the time complexity of a message optimal {\em synchronous} algorithm was $\Theta(\log n)$, while for a message optimal {\em asynchronous} algorithm, they only demonstrated an $O(n)$ time upper bound (improving previous time bounds for message optimal algorithms). They conjectured that {\em in the  asynchronous case, time complexity of any message optimal algorithm is} $\Omega(n)$. Singh \cite{singh} (as a part of presenting a different tradeoff), presented a somewhat better ($O(n / \log n)$) time for message optimal asynchronous algorithms, but still posed as an important open problem the question whether $\Omega(n / \log n)$ time was optimal for such algorithms. 
The current paper demonstrates that this is not the case, at least for randomized algorithms that succeed w.h.p., by giving a $O(\log^2 n)$ time bound for message optimal $O(n)$ algorithm. 

The earlier mentioned papers~\cite{sense-of-dir,LouiMW88,singh97} on ``{\em sense of direction}'' demonstrated that when the nodes possessed additional knowledge on the topology, it was possible to reduce the number of messages to $O(n)$.
In the same vein, note that for {\em deterministic} algorithms in  {\em  synchronous} 
networks and under the strong assumption of {\em simultaneous wake up}, there exists an $O(n)$ messages algorithm \cite{KPPRT15jacm}.
In contrast, an $\Omega(n\log n)$ message lower 
bound in synchronous networks was shown by Afek and Gafni~\cite{AG91} under the more common assumption of adversarial wake up.
The message complexity in the current paper is $O(n)$ without using sense of direction or simultaneous wakeup assumptions (but using randomized algorithms).

For anonymous networks under some reasonable assumptions, deterministic leader election was shown to be impossible, using symmetry arguments~\cite{AngluinSTOC80}. Randomization comes to the rescue in this case; random rank assignment is often used to assign unique identifiers, as done herein. Randomization also allows us to beat the lower bounds for deterministic algorithms, albeit at the risk of a small chance of error. For example, Afek and Matias~\cite{AM94} developed a randomized algorithm that succeeded with probability $1 - \epsilon$ and ran in $O(\log n)$ time using $O((n /\epsilon)\log^2 (1/\epsilon))$ messages. It should be noted that Singh's~\cite{singh} deterministic algorithm allowed a trade-off between time and messages such that it was possible to achieve a running time as low as $O(\log n)$ in exchange for more messages ($O(n^2/\log n)$). By setting $\epsilon$ to $1/n^c$, for a constant $c>0$, we see that Afek and Matias's algorithm succeeds with high probability and takes less messages ($O(n^{1+c} \log^2 n)$) for the same running time.

Turning to {\em synchronous} complete networks, Afek and Gafni showed an essentially singularly optimal deterministic algorithm with $O(\log n)$ time and $O(n \log n)$ messages.
Ramanathan et al. used randomization to improve the message complexity, and showed a randomized leader election algorithm for synchronous networks that could 
err with probability $O(1 / \log^{\Omega(1)}n )$ 
with time $O(\log n)$ and $O(n)$ messages\footnote{In contrast, the synchronous algorithm presented in the current paper succeeds with high probability, its time complexity is constant, and its complexity bounds hold with probability $1-O(1 / n^{\Omega(1)})$.}~\cite{KrishnaRamanathan:randomized}.
That paper also extends the synchronous algorithm to work for {\em partially} synchronous
networks, where message delays are bounded. 
It also surveys some related papers about randomized algorithms in 
other models that use more messages for performing leader election
~\cite{Gupta-ProbLE-DISC} or related tasks (e.g., quorum systems,
 Malkhi et al.~\cite{Malkhi-ProbQSystems}).
    In the context of self-stabilization, a randomized
 algorithm with $O(n \log n)$ messages and $O(\log n)$ time until stabilization was presented in \cite{dmitry}.

\begin{table*}[ht]
	\caption{\small Comparison of previous upper bound results for leader election in complete networks of $n$ nodes along with our contributions. 
	The variables $2 \leq c \leq n$, $\log n \leq k \leq n$, and $0 < \epsilon < 1$ are parameters provided to the respective algorithms. 
	} 
	\centering 
		\resizebox{1.0\columnwidth}{!}{%
	\begin{tabular}{|c|c|c|c|l|}
		\hline
		Paper & Message Complexity & Time Complexity  & Communication  & Type of \\
		& &    & Mode  & Solution\\
		\hline
		\hline
		\cite{KorachPODC1984} & $O(n \log n)$ & $O(n \log n)$ & Asynchronous  & Deterministic\\
		\hline
		\cite{AG91} & $O(n \log n)$ & $O(n)$ & Asynchronous & Deterministic  \\
		\hline
		\cite{singh} & $O(nk)$ & $O(n/k)$ & Asynchronous & Deterministic\\
		\hline
		\cite{AM94} & $O(n \log^2 (1/\epsilon) /\epsilon)$ & $O(\log n)$ & Asynchronous & Randomized, success prob. $1 - \epsilon$ \\
		\hline
		This paper & $O(n)$ w.h.p.  & $O(\log^2 n)$ w.h.p.  & Asynchronous & Randomized, success  w.h.p.*\\
		\hline
		\hline
		\cite{AG91} & $O(c n \log_c n)$ & $O(\log_c n)$ & Synchronous & Deterministic  \\
		\hline
		\cite{KrishnaRamanathan:randomized} & $O(n)$ & $O(\log n)$ & Synchronous & Randomized, success prob. $1 - O(1 / \log^{\Omega(1)}n)$ \\
		\hline
		This paper & $O(n)$ w.h.p. & $O(1)$  & Synchronous & Randomized, success w.h.p.*\\
		\hline
		\hline
		\multicolumn{5}{|l|}{*The algorithm always succeeds when nodes have unique identifiers (as opposed to anonymous networks).}\\
		\hline
	\end{tabular}
		}
	\label{table:upper-bound-results}
\end{table*}

\smallskip 
\noindent \textbf{Our Main Results.}
The main focus of this paper is on studying how randomization can help in designing singularly optimal algorithms for leader election in asynchronous as well as synchronous networks. 
Our results are summarized in Table 1.
Our main result is a  randomized asynchronous  leader election algorithm  for complete networks that runs in $O(\log^2 n)$ time and uses only $O(n)$
messages to elect a unique leader with high probability (Section \ref{sec:adv-wakeup-asynch}). 
This is a significant improvement over the $\Omega(n \log n)$ messages needed for any deterministic algorithm, and an even larger improvement over the time complexity of previous message optimal ($O(n \log n)$) deterministic algorithms for asynchronous networks, which required $O(n)$ or $O(n/\log n)$ time \cite{AG91,singh}.
In addition, we show that for the synchronous setting too, we can obtain a  
singularly optimal algorithm that tightly matches the best possible asymptotic time and message lower bounds. We present a randomized algorithm in synchronous networks that takes $O(1)$ time and $O(n)$ messages with high probability (see Section \ref{sec:adv-wakeup-synch-alg}). Note that our algorithms succeed w.h.p. in anonymous networks and always succeed when each node has a unique identifier.

Our results, while providing near-optimal message and time bounds for complete asynchronous networks, are a step towards designing singularly optimal algorithms for {\em general} asynchronous networks.

\section{A Randomized Algorithm for Asynchronous Networks \& Analysis}
\label{sec:adv-wakeup-asynch}

In this section, we first give some high level intuition of how we use  randomization to overcome the barriers of time and message complexity posed to deterministic asynchronous algorithms. Subsequently, we present a randomized algorithm that solves leader election in $O(\log^2 n)$ time with high probability using $O(n)$ messages with high probability.

\smallskip 
\noindent \textbf{High level intuition behind the use of randomness.} In~\cite{AG91}, an $\Omega(n \log n)$ message lower bound is presented. A key idea in the proof is that the adversary is able to control the destination of messages sent by a node. It can do this for a deterministic algorithm because the adversary may predict which edges will be used in the course of the algorithm and construct the initial graph accordingly. Intuitively, if multiple clusters of nodes are active, a leader cannot be determined until they interact. If the adversary can consistently control the destination of messages over unknown links, the adversary may blow up the number of messages until the clusters interact. We bypass this issue by having nodes choose the edges they use to communicate at random, similar to the idea used in~\cite{KPPRT15tcs}. Since the adversary cannot predict which edge will be used, it cannot initially construct an undesirable graph. Our second use of randomness is to have each node, once awake, choose its $\ID$ uniformly at random from $[1,n^4]$. This helps us achieve a significantly  smaller running time compared to the deterministic case. If $\ID$s are not randomly chosen but deterministically assigned, then the adversary may wake up nodes such that it takes $O(n)$ time from the time the first node is woken up until the algorithm terminates. This is because, in our algorithm, a node's $\ID$ plays an important role in deciding if it will go on to become the leader when interacting with another node or if it will stop trying to be the leader.

\smallskip 
\noindent \textbf{The Randomized algorithm.}
We now describe the algorithm. Each awake node may play up to two roles in the algorithm, namely (i) a candidate and (ii) a referee. 
During the process, candidates attempt to progress towards becoming the leader. The position of a candidate $u$ in this process is represented by the pair $\langle\ID_u,\PH_u\rangle$, where $\ID_u$ is $u$'s {\em rank} and $\PH_u$ is its current {\em phase}. Intuitively, we say that candidate $v$ is {\em ahead of} candidate $u$, and $u$ is {\em behind} $v$, if $v$ has a higher phase number, or, in the case of a tie, $v$ has a higher \ID. Formally, we say that $v\gg u$ if either (i) $\PH_v > \PH_u$, or (ii) $\PH_v=\PH_u$ and $\ID_v > \ID_u$.\footnote{If nodes have unique identifiers, then in the case of two nodes in the same phase with the same rank, their unique identifiers can be used to break ties. Furthermore, a node will append its unique identifier to its $\ID$ to avoid ambiguity.}

During the process, candidates gradually either {\em progress} or \textit{retire}, eventually resulting in a single un-retired candidate who then becomes the leader. 
Essentially, a candidate $u$ retires upon encountering another candidate $v$ that is ahead of it. Such an encounter occurs when a referee learns of both candidates, and it is the referee's task to make one of the candidates retire.
(The referee is typically a third party, but may also be $u$ or $v$.)
Hence, candidates fundamentally drive the algorithm forward, while referees serve a complementary role of guardians, assisting candidates in deciding whether to retire or to proceed. Each node is in one of three candidate states $\candidateSTATE$, $\nonelectedSTATE$, $\electedSTATE$, stored in the variable $\CANDSTATE$, corresponding to whether the node is still a candidate in the running to become a leader, is retired, or is elected, respectively.

If a node is woken up by the adversary, then it plays the role of a candidate and may eventually become the leader. If a node is woken up by a message from a neighbor, then it is immediately considered to be a retired candidate. When a node receives a message, depending on the type of message received, the node may either act in the role of a candidate or in the role of a referee, with an appropriate procedure called to handle the message. Thus, during the execution, a node either acts as just a referee or as both a candidate and a referee.

Each node $u$, once woken up, checks if it was woken by the adversary (i.e., spontaneously wakes up itself) or by a message from a neighbor. If $u$ was woken by a neighbor, $u$ sets its candidate state to $\nonelectedSTATE$. If $u$ was woken by the adversary, $u$ sets its candidate state to $\candidateSTATE$, chooses an integer in $[1,n^4]$ uniformly at random as its $ID_u$ and attempts to progress. The initialization is described in Algorithm~\ref{alg:initialization}.

\alglanguage{pseudocode}
\begin{algorithm}
	\caption{Algorithm Initialization, run by each node $u$ upon waking up.}
	\label{alg:initialization}
	
	\begin{algorithmic}[1]
	\State If at any time $u$ receives a message $\langle \ID_v,0,\LEADER\rangle $ from another node $v$, then $u$ sets the leader as $v$, sets $\CANDSTATE \gets \nonelectedSTATE$, and terminates
    \State $\REFSTATE \gets C0$
    \BlockOn{Run Procedures \texttt{Candidate}, \texttt{Referee\_Request\_Response}, \texttt{Referee\_Dispute\_Response}, and \texttt{Decide} in parallel until termination, according to the following rules:}
	\If{$u$ spontaneously woke up}
		\State $\CANDSTATE \gets \candidateSTATE$ 
		\State Choose integer in $[1,n^4]$ uniformly at random to be $u$'s rank $\ID_u$
		\State Run Procedure \texttt{Candidate}
	\Else 
		\State $\CANDSTATE \gets \nonelectedSTATE$
	\EndIf
	
	\State If $u$ receives a message of the form $\langle \cdot, \cdot, \REQUEST \rangle$, run Procedure \texttt{Referee\_Request\_Response}
	\State If $u$ receives a message of the form $\langle \cdot ,\cdot,\WINS\rangle \circ \langle \cdot,\cdot, \LOSES \rangle$ or $\langle \cdot ,\cdot,\LOSES\rangle \circ \langle \cdot,\cdot, \WINS \rangle$, run Procedure \texttt{Referee\_Dispute\_Response}
	\State If $u$ receives a message of the form $\langle \cdot, \cdot, \DECIDE\rangle $, run Procedure \texttt{Decide}
	
	 \BlockOff
	\end{algorithmic}
\end{algorithm}

A candidate $u$ progresses through at most $\numphaserand=\lceil \log \sqrt{4n \log n} \rceil +1$ phases until it either changes its candidate state to $\nonelectedSTATE$ or completes phase $\numphaserand$ and stays in candidate state $\candidateSTATE$, in which case it declares itself leader.
 At the beginning of each phase $i=1$ to $\numphaserand-1$, a candidate node $u$ chooses a set $\mathcal{S}$ of $\min \lbrace 10 \cdot 2^i, \lceil \sqrt{4n \log n} \rceil \rbrace$ nodes uniformly at random designated as its {\em referees} and seeks their approval to progress.\footnote{We assume that node $u$ treats itself as a referee as well in addition to the nodes of $\mathcal{S}$.} In phase $i = \numphaserand$, $u$ sets $\mathcal{S} \gets V$, i.e. all nodes of the network form set $\mathcal{S}$.
 Node $u$ sends a {\em request} message $\langle \ID_u,\PH_u, \REQUEST\rangle$ to each referee in $\mathcal{S}$.

 When node $u$ receives replies from all nodes in $\mathcal{S}$, it checks if any of those replies is a decline of the form $\langle \ID_u, \PH_u, \DECLINED\rangle$. If so, $u$ changes its candidate state to $\nonelectedSTATE$.\footnote{The candidate state of $u$ can also be changed to $\nonelectedSTATE$ as a result of processing a $\DECIDE$ message, described later.} In case $u$ received no declines, it may proceed. In particular, if $i < \numphaserand$, then $u$ increases $\PH_u$ to $i+1$ and starts the next phase, and if $i=\numphaserand$ and $u$ still retains candidate state $\candidateSTATE$, then $u$ declares itself as leader, namely, it changes its candidate state to $\electedSTATE$, broadcasts the final announcement $\langle \ID_u,0,\LEADER\rangle$ and terminates.
The above process is described in Procedure~\ref{alg:candidate}.

\alglanguage{pseudocode}
\begin{algorithm}
	\caption{Procedure Candidate, run by each candidate $u$.}
	\label{alg:candidate}
	
	\begin{algorithmic}[1]
	\For{$i \gets 1$ to $\numphaserand$}
	    \If{$i < \numphaserand$}
		    \State Choose a set $S$ of $\min \lbrace 10 \cdot 2^i, \lceil \sqrt{4n \log n} \rceil \rbrace$ neighbors uniformly at random
		\Else
		    \State Set $S \gets V$
		\EndIf
		\State Send message $\langle \ID_u,i,\REQUEST \rangle$ to each node in $S$
		\State Wait for replies from all referees in $S$. \\ \hspace{20pt}
		(* Other procedures may run in parallel and do other things in the meantime.*)
		\If{any reply is of the form $\langle \ID_u,\PH_u, \DECLINED\rangle $}
			\State $\CANDSTATE \gets \nonelectedSTATE$
		\EndIf
		\If{$\CANDSTATE = \nonelectedSTATE$}
			\State Exit procedure
		\EndIf
	\EndFor
	\If{$\CANDSTATE = \candidateSTATE$}
		\State $\CANDSTATE \gets \electedSTATE$
		\State Broadcast $\langle \ID_u,0,\LEADER\rangle $ and terminate
	\EndIf
	\end{algorithmic}
\end{algorithm}

Each referee $r$ helps candidates retire (until only one is left) by comparing candidate pairs, keeping the more advanced one, and instructing the other to retire. 
Referee $r$ can be in one of four referee states stored in the variable $\REFSTATE$.

\begin{itemize}
\item
$\REFSTATE=C0$ holds when $r$ has not been approached by any candidate yet.
\item
$\REFSTATE=C1$ holds when $r$ has one approved candidate, referred to as its {\em chosen} candidate, and has declined every other candidate that approached it so far.
A record containing the position of the chosen candidate $v$, namely, 
$\RECORD(v) = \langle \ID_v,\PH_v\rangle$, is kept in the variable $\Chosen$. 
\item
$\REFSTATE=C2$ holds when $r$ currently keeps track of two candidates:
the chosen $v$ (in the variable $\Chosen$) and a contender $w$ (in the variable $\Contender$), such that $w\gg v$, and all other candidates that approached $r$ so far were declined. Moreover, a {\em dispute} is currently in progress between $v$ and $w$. This state is typically reached when $r$ has a chosen candidate $v$ that was approved by it, and later it gets a request from another candidate $w$ such that $w\gg v$. In this situation, $r$ cannot decline $w$ (since it is ahead of its current chosen), but at the same time it cannot approve $w$, since it may be that $v$ has progressed in the meantime, and it is now ahead of $w$. To resolve this uncertainty, $r$ declares a dispute, and sends a $\DECIDE$ message  containing $w$'s position $\langle\ID_w,\PH_w\rangle$ to $v$, asking it to make a comparison between $w$ and itself (based on its current phase $\PH_v$). While waiting for $v$'s response, $r$ keeps a record containing the details of $w$, namely, $\RECORD(w) = \langle \ID_w, \PH_w \rangle$, in the variable $\Contender$.
\item
$\REFSTATE=C3$ is similar to $C2$, i.e., it holds when $r$ currently keeps track of a chosen $v$ and a contender $w$, all other candidates that approached $r$ so far were declined, and a dispute is currently in progress. The difference, however, is that the on-going dispute does not involve $w$. Rather, it is between $v$ and some {\em previous} contender $z$ such that $w\gg z\gg v$. This state is typically reached when a new candidate $w$ approaches $r$ while $r$ is in referee state $C2$ with a dispute in progress between a chosen candidate $v$ and a contender $z\gg v$, and $r$ discovers that $w\gg z$. This allows $r$ to decline the current contender $z$ immediately (since even if the outcome of the dispute favors $z$ over $v$, the new candidate $w$ is ahead of $z$, so $z$ must retire). Now $w$ takes $z$'s place as the contender. 
\end{itemize}

If $r$ receives a message  $\langle \ID_u,\PH_u, \REQUEST\rangle $ from node $u$ over some edge $e$, it responds as follows. 
\begin{itemize}
\item
If $r$ is in referee state $C0$, then it registers $u$ as its chosen candidate, sends back an approval message, and switches its referee state to $C1$.
\item
If $r$ is in referee state $C1$, then the following sub-cases may apply:
If the current chosen is also $u$ (from an earlier phase), then $r$ updates the record stored in $\Chosen$.
If it is another node $v$ such that $v\gg u$, then $r$ sends $u$ a decline message.
Otherwise (i.e., if $u\gg v$), $r$ registers $u$ as the contender and initiates a dispute by sending
a $\DECIDE$ message to $v$, requesting it to compare its current position with that of $u$. It also switches its referee state to $C2$. 
\item
If $r$ is in referee state $C2$, signifying that a dispute is in progress, then $r$ compares the new candidate $u$ with the current contender $w$. 
If $u\ll w$ then $r$ declines $u$'s request (and thus retires $u$). 
Otherwise (i.e., if $u\gg w$), $r$ retires $w$, registers $u$ as the new contender, and switches its referee state to $C3$.
\item
If $r$ is in referee state $C3$, then it does the same as in referee state $C2$.
\end{itemize}
The pseudocode for the referee's actions on receiving a request is given in Procedure~\ref{alg:referee}.

\alglanguage{pseudocode}
\begin{algorithm}
	\caption{Procedure Referee\_Request\_Response, run by each referee $r$ on receiving a message $\langle \ID_u, \PH_i, \REQUEST\rangle$ from candidate $u$.}
	\label{alg:referee}
	
\begin{algorithmic}[1]
\If{$\REFSTATE=C0$ (* $u$ is the first candidate to approach $r$ *)}
	\State Set $\Chosen\gets \RECORD(v) = \lbrace \ID_u, \PH_u \rbrace$
	\State Send message $\langle \ID_u, \PH_i, \APPROVED \rangle$ to $u$
	\State Set $\REFSTATE \gets C1$
\ElsIf{$\Chosen=u$ (* $u$ is the current chosen candidate, from an earlier phase *)}
	\State Set $\Chosen\gets \lbrace \ID_u, \PH_u \rbrace$
\ElsIf{$\REFSTATE=C1$ (* There is a chosen candidate $v=\Chosen$; all other candidates that approached $r$ so far were rejected *)}
    \If{$u \ll v$}
    \State Send message $\langle \ID_u, \PH_u, \DECLINED \rangle$ to $u$
    \Else
    \State Set $\Contender\gets \RECORD(u) = \lbrace \ID_u, \PH_u \rbrace$
    \State Send message $\langle\ID_u,\PH_u,\DECIDE\rangle$ to the chosen $v$
    \State Set $\REFSTATE \gets C2$
    \EndIf
\ElsIf{$\REFSTATE=C2$ (* A dispute is in progress between the current chosen $v=\Chosen$ and the current contender $w=\Contender$, $w\gg v$ *)}
	\If{$u\ll w$}
	\State Send message $\langle \ID_u, \PH_u, \DECLINED \rangle$ to $u$
	\Else
	\State Send message $\langle \ID_w, \PH_w, \DECLINED \rangle$ to $w$
	\State Set $\Contender\gets \RECORD(u) = \lbrace \ID_u, \PH_u \rbrace$
    \State Set $\REFSTATE \gets C3$
	\EndIf
\ElsIf{$\REFSTATE=C3$ (* A dispute is in progress between the current chosen $v=\Chosen$ and some predecessor $z$ of the current contender $w=\Contender$, $w\gg z\gg v$ *)}
    \If{$u\ll w$}
    \State Send message $\langle \ID_u, \PH_u, \DECLINED \rangle$ to $u$
    \Else
    \State Send message $\langle \ID_w, \PH_w, \DECLINED \rangle$ to $w$
	\State Set $\Contender\gets \RECORD(u) = \lbrace \ID_u, \PH_u \rbrace$
    \EndIf
\EndIf
\end{algorithmic}
\end{algorithm}

When a node $v$ which is currently the chosen candidate of the referee $r$ (but may have possibly retired since the time it was approved by $r$) receives from $r$ an $\langle \ID_u, \PH_u, \DECIDE\rangle$ message, it must decide whether to end its candidacy (if it is still a candidate) or that of the other candidate. To do so, it compares its own current position with that of the contender $u$. 

\begin{itemize}
\item 
If $v$ is in candidate state $\nonelectedSTATE$ or $v\ll u$, then $v$ returns the message $\langle \ID_u,\PH_u,\WINS\rangle \circ \langle \ID_v,\PH_v, \LOSES \rangle$. 
\item
Otherwise ($v$ is still in candidate state $\candidateSTATE$ and is ahead of $u$), it returns the message $\langle \ID_u, \PH_u, \LOSES\rangle \circ \langle \ID_v,\PH_v,\WINS\rangle$. 
\end{itemize}
Pseudocode for the chosen's actions on receiving a $\DECIDE$ message from a referee is given in Procedure~\ref{alg:decide}.

\alglanguage{pseudocode}
\begin{algorithm}
	\caption{Procedure Decide, run by a node $v$ upon receiving the message $\langle \ID_u,\PH_u, \DECIDE\rangle $ from node $r$.}
	\label{alg:decide}
	
	\begin{algorithmic}[1]

	\If{$\CANDSTATE = \nonelectedSTATE$}
		\State Send the message  $\langle \ID_u,\PH_u, \WINS\rangle \circ \langle \ID_v, \PH_v, \LOSES\rangle$ to $r$
	\ElsIf{$\CANDSTATE = \candidateSTATE$}
		\If{$u\gg v$}
			\State $\CANDSTATE \gets \nonelectedSTATE$
			\State Send the message  $\langle \ID_u,\PH_u, \WINS\rangle \circ \langle \ID_v, \PH_v, \LOSES\rangle$ to $r$
		\Else
			\State Send the message $\langle \ID_u,\PH_u, \LOSES\rangle \circ \langle \ID_v, \PH_v, \WINS\rangle$ to $r$
		\EndIf

	\EndIf

	\end{algorithmic}
\end{algorithm}

Once the chosen's reply message is received by $r$, it is processed as follows.
\begin{itemize}
\item
If $v$ replied that the contender $u$ has won the dispute, $r$ sends an approval message to the current contender (which is $u$ if the referee state is $C2$, and another candidate if the referee state is $C3$), makes it the chosen, and switches to referee state $C1$. 
\item
If $v$ replied that it has progressed beyond the contender $u$, so $u$ has lost the dispute and must retire, then there are two sub-cases to consider. If the referee state is $C2$, $r$ sends a decline message  to the contender $u$, and switches to state $C1$.
Now suppose the referee state is $C3$ and the current contender is some candidate $w$.
If $v$'s current position is such that $v\gg w$, then $r$ sends a decline message to $w$ and switches to referee state $C1$. Otherwise ($w\gg v$), a new dispute is required, this time between $v$ and $w$.
\end{itemize}
The referee's actions on receiving a reply from the chosen about an ongoing dispute are given in Procedure~\ref{alg:referee2}.

\alglanguage{pseudocode}
\begin{algorithm}
	\caption{Procedure Referee\_Dispute\_Response, run by a referee $r$ on receiving a reply to a message $\langle\ID_u,\PH_u,\DECIDE\rangle$ from the chosen $v$. This message only arrives while the referee is in state $C2$ or $C3$.}
	\label{alg:referee2}
	
\begin{algorithmic}[1]
\State Let $w$ be the $\Contender$ ~~(* which may be different from $u$ *)
\If{reply is $\langle \ID_u, \PH_u, \WINS\rangle \circ \langle \ID_v, \PH_v, \LOSES\rangle$ ~(* $u$ won the dispute *)}
    \State Send message  $\langle \ID_w, \PH_w, \APPROVED\rangle $ to node $w$
    \State Set $\Chosen\gets \Contender$; $\Contender\gets \bot$
    \State Set $\REFSTATE \gets C1$
\Else ~(* The chosen $v$ won the dispute over $u$ *) 
    \If{$\REFSTATE=C2$ OR ($\REFSTATE=C3$ AND $v\gg w$)}
        ~~ (* $v$ remains the chosen *)
        \State Send message $\langle \ID_w, \PH_w, \DECLINED\rangle$ to node $w$
        \State $\Contender\gets \bot$; Set $\REFSTATE \gets C1$
    \Else ~(* $\REFSTATE=C3$ but $w\gg v$, so a new dispute is required *)
        \State Send message $\langle\ID_w,\PH_w,\DECIDE\rangle$ to the chosen $v$
        \State Set $\REFSTATE \gets C2$
    \EndIf
\EndIf
\end{algorithmic}
\end{algorithm}

\smallskip 
\noindent \textbf{Analysis of the algorithm.} We establish the following theorem.

\begin{theorem}\label{the:logn-rounds-asynch-rand}
Consider a complete anonymous network $G$ of $n$ nodes in the $\mathcal{CONGEST}$ model.
Assume communication is asynchronous with adversarial wakeup. Then there is a randomized algorithm to solve leader election with high probability in $O(\log^2 n)$ time with high probability
using $O(n)$ messages with high probability. If each node has a unique identifier, then the algorithm always succeeds. All nodes terminate at the end of the algorithm.
\end{theorem}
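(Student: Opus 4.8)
The plan is to prove four things in turn: (i) at most one node is elected, (ii) at least one node is elected, (iii) every node learns the leader and terminates, and (iv) the stated time and message bounds. Throughout I would call a node a \emph{candidate} if it was woken by the adversary, write $p_c$ for the final phase of candidate $c$ (which is frozen once $c$ retires or is elected), and use two standing facts: a candidate's phase only increases, and channels are FIFO, so a referee receives a given candidate's requests in increasing-phase order and a candidate holding an unanswered request at a referee cannot advance past the phase of that request. First I would establish, by a (tedious but routine) case analysis over the referee states $C0,\dots,C3$ in Procedures~\ref{alg:referee}--\ref{alg:referee2}, two invariants: \textbf{(a)} the recorded phase of a referee $r$'s \Chosen\ never decreases, and when \Chosen\ changes (a contender wins a dispute) the new record is $\gg$ the old one; \textbf{(b)} whenever a candidate $c$ is forced to retire --- either by receiving a $\DECLINED$ matching a phase it is currently waiting on, or by losing a dispute in Procedure~\ref{alg:decide} --- the candidate or record $c'$ that caused the retirement satisfies $p_{c'}\ge p_c$, with equality only if $\ID_{c'}>\ID_c$.

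For uniqueness (i), note that in phase $\numphaserand$ every node is a referee of every phase-$\numphaserand$ candidate. If two candidates $u,u'$ with $\ID_u>\ID_{u'}$ were both elected, then $u$, in its referee role, must have sent $\APPROVED$ to $u'$'s phase-$\numphaserand$ request; tracing the evolution of $u$'s \Chosen\ via invariant~(a), and using that the recorded phase is capped at $\numphaserand$, one shows that when $u$'s own phase-$\numphaserand$ request reaches the referee $u$, the latter either declines $u'$ or issues a $\DECIDE$ that makes $u'$ retire through Procedure~\ref{alg:decide} --- contradicting that $u'$ is elected. For existence (ii), I would form the \emph{retirement digraph} on candidates with an edge $c\to c'$ whenever $c'$ is the node that caused $c$ to retire; each candidate has out-degree at most one, and by invariant~(b) the lexicographic pair $(p_c,\ID_c)$ strictly increases along every edge, so the digraph is acyclic and some candidate never retires. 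A separate liveness lemma --- every $\REQUEST$ eventually gets a reply, because $\DECIDE$ messages are answered immediately (Procedure~\ref{alg:decide}) and a referee performs only $O(1)$ (re-)disputes per request --- then shows that a never-retiring candidate completes all $\numphaserand$ phases and is elected. Claim (iii) follows from line~1 of Algorithm~\ref{alg:initialization}: the elected node broadcasts $\langle\ID,0,\LEADER\rangle$ to the other $n-1$ nodes, waking any still-dormant node, fixing its leader, and terminating it. Distinct ranks are the only place randomness enters correctness: with unique identifiers they are made distinct by appending them (so the algorithm is always correct), and anonymously a union bound over $\binom n2$ pairs of values drawn from $[1,n^4]$ gives distinctness with probability at least $1-1/(2n^2)$.

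For the message bound I would first reduce the count: the broadcast costs $n-1$ messages, each approval/decline is charged to its $\REQUEST$, and by invariant~(b)-type bookkeeping a referee emits only $O(1)$ $\DECIDE$'s (and hence receives $O(1)$ replies) per $\REQUEST$ it gets; so the total is $O(n)+O(R)$, where $R$ is the number of $\REQUEST$ messages. A candidate with $p_c<\numphaserand$ sends $\sum_{i\le p_c}\min\{10\cdot2^i,\lceil\sqrt{4n\log n}\rceil\}=O(2^{p_c})$ requests, and one reaching phase $\numphaserand$ sends an additional $n$; hence $R=O\!\big(\sum_c 2^{p_c}\big)+n\cdot|\{c:p_c=\numphaserand\}|$. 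The crux is to show that, w.h.p., the number $N_i$ of candidates reaching phase $i$ decays fast enough that $\sum_c 2^{p_c}=O(n)$ and, in particular, $|\{c:p_c=\numphaserand\}|=O(1)$. The lever is that reaching phase $i{+}1$ requires all $\Theta(\min\{2^i,\sqrt{n\log n}\})$ members of a \emph{uniformly random} referee set to approve the phase-$i$ request, and by invariant~(a) each approval momentarily makes that referee's \Chosen\ equal to that candidate; combining this with the birthday-type bound that two independent uniform $m$-subsets of $[n]$ are disjoint with probability at most $e^{-m^2/n}$ (which is $n^{-4}$ once $m=\lceil\sqrt{4n\log n}\rceil$) and with invariant~(b) at a shared referee, a union bound over candidate pairs --- conditioned so that the relevant referee-set coins are still fresh --- shows that surviving successive phases is progressively unlikely, so the population shrinks geometrically and essentially no candidate reaches phase $\numphaserand$. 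The time bound is obtained by showing that the never-retiring candidate completes each of its $\numphaserand=O(\log n)$ phases within $O(\log n)$ time (a contacted referee replies within $O(1)$ time once it decides, but the candidate may be queued behind, or delayed by, the disputes and catch-up re-disputes triggered by concurrent doomed candidates, and the adversary's freedom to stagger wake-ups can only add an $O(\log n)$ factor), plus $O(1)$ for the final broadcast, giving $O(\log^2 n)$.

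The step I expect to be the main obstacle is the message analysis: rigorously quantifying how fast the set of live candidates shrinks while an \emph{adaptive} adversary schedules wake-ups and message delays after seeing the random ranks and referee choices, and in particular nailing the $O(1)$ bound on the number of candidates that ever reach the last phase $\numphaserand$. The state-machine case analysis behind invariants~(a)--(b) is routine but must be carried out carefully across all of Procedures~\ref{alg:referee}--\ref{alg:referee2}, and the time analysis likewise needs a careful argument that adversarial wake-up cannot delay the eventual leader by more than $O(\log^2 n)$.
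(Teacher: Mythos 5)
Your correctness skeleton (uniqueness via the phase-$\numphaserand$ all-nodes referee set, existence via an acyclic retirement digraph ordered by $(\PH,\ID)$, termination via the $\LEADER$ broadcast, and randomness entering only through rank collisions) is sound and close in spirit to the paper's Lemma~\ref{lem:exactly-one-leader}. The two quantitative pillars, however, each miss the key idea, and the mechanisms you propose in their place would not go through as stated.

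For the message bound, your lever is a pairwise birthday bound: two uniform $m$-subsets are disjoint with probability at most $e^{-m^2/n}$, union-bounded over candidate pairs. That exponent is only useful when $m\approx\sqrt{4n\log n}$, i.e., in the last couple of phases (it is exactly the paper's Lemma~\ref{lem:one-candidate-survives-final-phase-rand}). In an early phase $i$ the referee sets have size $10\cdot 2^i$, so two fixed candidates' sets are disjoint with probability about $e^{-100\cdot 4^i/n}\approx 1$, and no union bound over pairs yields the per-phase constant-factor shrinkage you need (and without geometric decay in the early phases the request count alone is $\omega(n)$). The paper's Lemma~\ref{lem:limit-num-candidates-per-phase-rand} instead compares each low-rank candidate against the \emph{aggregate} of requests made by the top $1/16$ (by rank) of the phase-$k$ candidates: those candidates collectively issue $\Theta(n/2^k)$ uniformly random requests, so each of a low-rank candidate's $10\cdot 2^k$ referees is ``top-free'' with probability about $e^{-\Theta(2^{-k})}$, making the survival probability of a low-rank candidate a constant $e^{-25/4}$ independent of $k$; a Chernoff bound then gives the factor-$4$ decay per phase w.h.p. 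This candidate-versus-aggregate argument (not candidate-versus-candidate) is the missing ingredient, and it must also be combined with the ordering by rank so that an intersection actually retires the \emph{lower} candidate.

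For the time bound, your statement that the eventual leader finishes each phase in $O(\log n)$ time, with ``adversarial staggering adding only an $O(\log n)$ factor,'' does not address the actual difficulty: the clock starts at the first wake-up, and the adversary may wake the eventual leader only after a long chain of earlier front-runners have successively retired one another. The paper bounds this via an encounter chain (Observation~\ref{obs:time-complexity-rand} and Lemma~\ref{lem:running-time-rand}): along the chain, every retirement either strictly increases the phase (at most $\numphaserand$ times) or keeps the phase and strictly increases the rank; since ranks are i.i.d.\ uniform, the number of same-phase rank increases between two phase increases is the number of running-maximum updates of random values, which is $O(\log n)$ w.h.p.\ (Lemma~\ref{lem:sum-exp-values-bound}), and each step costs $O(1)$ time by Lemma~\ref{lem:time-bound-phase-O1}. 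Without this record-maxima argument you have no defense against the adversary waking $\Theta(n)$ candidates one by one, each retiring its predecessor, which is exactly the $\Theta(n)$-time scenario the paper's randomized ranks are designed to rule out; asserting an $O(\log n)$ factor is the conclusion, not a proof. You flag both analyses as ``the main obstacle,'' which is accurate, but the specific tools needed --- aggregate coverage by the top rank fraction for messages, and the record-value/encounter-chain decomposition for time --- are absent from the proposal.
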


We show three properties:  exactly one leader is chosen w.h.p.\ for anonymous networks and always when nodes have unique identifiers and all nodes subsequently terminate, the time complexity is $O(\log^2 n)$ w.h.p., and the message complexity is $O(n)$ w.h.p.

Before we continue with the proof, we make an important observation.
\begin{observation}\label{obs:node-start-phase-end-phase}
For each node $u$ that participates in the algorithm, if $u$ begins phase $i$, then $u$ will 
finish phase $i$.
\end{observation}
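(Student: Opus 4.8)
The plan is to read this as a \textbf{no-deadlock} statement. The only step at which a candidate $u$ can block during phase $i$ is the wait for replies from all of its referees in Procedure~\ref{alg:candidate}, so it suffices to show that each referee $r$ in $u$'s phase-$i$ referee set $S$ eventually sends $u$ a reply ($\APPROVED$ or $\DECLINED$). First I would record two standing facts: by the asynchronous model every message that is sent is delivered after a finite delay, and local computation is instantaneous; and, by inspection of Procedures~\ref{alg:referee}, \ref{alg:decide}, and~\ref{alg:referee2}, none of these contains a blocking wait --- a referee processes a $\REQUEST$ in finitely many steps, a chosen node always answers a $\DECIDE$ in a single step (Procedure~\ref{alg:decide}), and a referee processes a dispute reply in finitely many steps, possibly emitting one further $\DECIDE$. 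Hence the only ``outstanding obligations'' in the whole system are candidates waiting for referee replies and referees waiting for a $\DECIDE$ reply, and the latter is always a wait on a node that answers the instant the $\DECIDE$ reaches it. I would also note that a referee in state $C0$ answers immediately, including a node that this very $\REQUEST$ has just woken up (it starts in $C0$ by Algorithm~\ref{alg:initialization}), which covers the phase-$\numphaserand$ case $S=V$ where some referees may still be asleep.

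Next I would trace a single request $\langle\ID_u,i,\REQUEST\rangle$ at a referee $r\in S$. When $r$ processes it, Procedure~\ref{alg:referee} shows that $r$ either (a)~immediately sends $u$ an $\APPROVED$ or $\DECLINED$ message (in state $C0$; in $C1$ with $u\ll v$; in $C2$ or $C3$ with $u\ll w$; and --- one verifies by inspection --- also when $u$ is $r$'s current chosen from an earlier phase, in which case $r$ likewise acknowledges the re-request), or (b)~records $u$ as its contender, in which case a $\DECIDE$ to the chosen $v$ is outstanding. In case~(b) I claim $r$ replies to $u$ within at most two $\DECIDE$ round-trips: once $u$ is the contender, the only $u$-relevant event at $r$ is either a reply to a pending $\DECIDE$ --- which by Procedure~\ref{alg:referee2} either sends $u$ an $\APPROVED$/$\DECLINED$ outright, or, in the single sub-case $\REFSTATE=C3$ with $w\gg v$, spawns exactly one more $\DECIDE$ (now a dispute between $v$ and $w=u$) and moves $r$ to $C2$, so that the next $\DECIDE$ reply does produce a reply to $u$ --- or the arrival of a strictly larger request $w'\gg u$, which by Procedure~\ref{alg:referee} makes $r$ send $u$ a $\DECLINED$ at once. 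Each $\DECIDE$ round-trip is finite since the chosen $v$ is awake (it became the chosen by having itself sent $r$ a $\REQUEST$) and answers in one step. Therefore $r$ replies to $u$ in finite time; collecting the replies from all of $S$, $u$ finishes phase~$i$.

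The subtlety to be careful about --- and, I expect, the only real content beyond bookkeeping --- is the case where the chosen $v$ has \emph{terminated} by the time $r$'s $\DECIDE$ reaches it, so the round-trip never closes. A node terminates only by declaring itself leader or by receiving a $\langle\ID_\ell,0,\LEADER\rangle$ message (Algorithm~\ref{alg:initialization}); either way a leader has been elected and is broadcasting the announcement, so within finite time $u$ itself receives it and terminates. Thus in every execution $u$ either completes phase~$i$ via Procedure~\ref{alg:candidate} or terminates because a leader has been announced, and in no execution is $u$ blocked indefinitely inside phase~$i$, which is what the observation asserts; formally one folds the ``terminates because of a leader announcement'' alternative into the statement, or appeals to the subsequent lemmas that guarantee termination once a leader exists. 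The part I expect to be fiddly is keeping straight, across the $C2$/$C3$ transitions in Procedures~\ref{alg:referee} and~\ref{alg:referee2}, which candidate an outstanding $\DECIDE$ concerns versus who the current contender is; the rest is a direct appeal to finite message delay together with the absence of any blocking wait in the referee and decide procedures.
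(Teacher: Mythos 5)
Your argument is correct in substance, but it is worth knowing that the paper never actually proves this observation: it is stated bare, with the remark that it is used implicitly, and the only place the paper comes close to justifying it is the proof of Lemma~\ref{lem:time-bound-phase-O1}, which asserts the stronger fact that a phase completes in $O(1)$ time by bounding the message load on the candidate--referee and referee--chosen edges (the longest chain being $u \to r \to v \to r \to u$). So your route is genuinely different: you give an explicit liveness (no-deadlock) argument --- every referee in $S$ eventually replies --- by checking that Procedures~\ref{alg:referee}, \ref{alg:decide} and \ref{alg:referee2} contain no blocking wait, that the chosen is necessarily awake and answers a $\DECIDE$ immediately, and that once $u$ is the contender it is resolved within at most two $\DECIDE$ round-trips or is declined outright by the arrival of a larger candidate. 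That case analysis across the $C1/C2/C3$ transitions is exactly the bookkeeping the paper skips, and it buys a cleaner separation: your observation-level argument needs only finite message delays, whereas the paper's Lemma~\ref{lem:time-bound-phase-O1} folds the liveness claim into a quantitative congestion bound.

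Two caveats. First, in the branch of Procedure~\ref{alg:referee} where $u$ is $r$'s current chosen re-requesting from a later phase, the pseudocode as written only updates $\Chosen$ and sends no reply; you assert ``by inspection'' that $r$ acknowledges, which is clearly the intended behaviour but is an omission in the paper that you are silently repairing rather than verifying --- flag it rather than attribute it to the code. Second, your terminated-chosen (and, symmetrically, terminated-referee) case does mean the observation is literally false as stated in executions where a leader announcement overtakes $u$ mid-phase; your resolution --- $u$ then terminates via the $\LEADER$ message, which suffices for every use the paper makes of the observation --- is the right one, but it should be stated as a mild reformulation of the claim (``finishes phase $i$ or terminates''), not left as an appeal to later lemmas, since those lemmas themselves lean on this observation.
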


Observation~\ref{obs:node-start-phase-end-phase} is used implicitly in the remaining proof whenever a node is mentioned to finish some phase and possibly perform some calculation as a result of completing that phase.

\begin{lemma}\label{lem:exactly-one-leader}
By the end of the algorithm, exactly one node is in candidate state $\electedSTATE$ w.h.p. for anonymous networks and always when nodes have unique identifiers and the remaining nodes are in candidate state $\nonelectedSTATE$. Furthermore, all nodes eventually terminate.
\end{lemma}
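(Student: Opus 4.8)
The plan is to prove the two halves separately: (a) \emph{at least one} candidate survives to declare itself leader, and (b) \emph{at most one} does, and then argue termination. For part~(b), the heart of the matter is a monotone-progress invariant on referees. The plan is to define, for each referee $r$ and each point in the execution, its ``chosen record'' $\RECORD(v)=\langle\ID_v,\PH_v\rangle$ (when $\REFSTATE\in\{C1,C2,C3\}$), and show: (i) along any single referee's timeline the chosen record is nondecreasing in the $\gg$ order restricted to a fixed candidate, and more importantly the chosen \emph{candidate} is only ever replaced by a strictly $\gg$-larger one (this is exactly what Procedures~\ref{alg:referee} and~\ref{alg:referee2} do: a contender $w$ becomes chosen only after winning a dispute, i.e.\ only after the old chosen $v$ reported $v\ll w$ with $v$'s then-current phase, and $v$'s phase only increases); (ii) whenever a referee declines a candidate $u$, there is at that moment another candidate $u'$ with $u'\gg u$ that $r$ is tracking (chosen or contender); and (iii) a candidate $u$ can complete phase $i$ only if \emph{every} referee in its set $\mathcal S$, in particular $u$ itself, approved it, hence never declined it. Combining these: if two candidates $x,y$ both complete the final phase $\numphaserand$, then in that phase both chose $\mathcal S=V$, so each was approved by \emph{every} node, including by $y$ and by $x$ respectively acting as referees; but a node that has itself reached phase $\numphaserand$ as a candidate, when later asked (as a referee in state $C1$) by the other phase-$\numphaserand$ candidate, will either decline it or start a dispute that the self-candidate wins using $\gg$ (ties broken by rank, or by unique identifier in the identified case), so it cannot have sent both of them an unconditional approval. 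A short case analysis on who reached $V$ first, and on the referee state of the later one when the request arrives, yields the contradiction. In the anonymous case the only way this argument fails is if the two finalists have equal $\langle\ID,\PH\rangle$; since ranks are drawn uniformly from $[1,n^4]$ and at most $n$ candidates are ever created, a union bound over pairs gives a rank collision probability $O(n^2/n^4)=O(1/n^2)$, so w.h.p.\ all ranks are distinct and the argument goes through; with unique identifiers ties are broken deterministically and it always goes through.

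For part~(a), existence of a leader, the plan is to use Observation~\ref{obs:node-start-phase-end-phase} together with a ``no candidate retires without a strictly-ahead witness'' argument to exhibit a surviving candidate. Concretely, consider the set of candidates that ever reach phase $\numphaserand$ (nonempty: I argue the $\gg$-maximal candidate, among those that ever exist, is never declined, because by (ii) every decline of a candidate exhibits a strictly-$\gg$-larger candidate, contradicting maximality; since it is never declined and by Observation~\ref{obs:node-start-phase-end-phase} it finishes every phase it begins, and it begins phase~$1$ upon spontaneous wakeup, it reaches and completes phase $\numphaserand$). That candidate sets $\CANDSTATE\gets\electedSTATE$ and broadcasts $\langle\ID,0,\LEADER\rangle$, so at least one leader exists. (One must check the $\gg$ order is a strict total order on the at-most-$n$ distinct $\langle\ID,\PH\rangle$ pairs, which it is once ranks are distinct; the ``maximal candidate'' is then well-defined — note $\PH$ is bounded by $\numphaserand$, so there is no infinite ascending chain.)

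For termination: each candidate runs Procedure~\ref{alg:candidate} for at most $\numphaserand=O(\log n)$ phases, and in each phase it waits for replies from a finite set $\mathcal S$; I argue every request eventually gets a reply — a referee in states $C0,C1$ replies immediately (approve/decline), and a request that triggers a dispute causes a $\DECIDE$ message to the chosen, which by Procedure~\ref{alg:decide} always replies, after which Procedure~\ref{alg:referee2} always sends the pending contender either an $\APPROVED$ or a $\DECLINED$; since each referee state change is $C0\to C1\to C2\to C3\to C1$ and each incoming request advances or resolves, no request is deferred forever (here I use FIFO channels and finite delays, and the fact that a given referee handles only finitely many requests because only finitely many candidates exist and each sends a given referee at most $\numphaserand$ requests). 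Hence every candidate either retires or declares leadership in finite time; the leader's $\LEADER$ broadcast reaches all $n$ nodes in finite time (line~1 of Algorithm~\ref{alg:initialization}), and upon receipt each node sets its state and terminates. Non-candidate nodes only ever act as referees, whose obligations are all discharged as above, so they too terminate.

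\textbf{Main obstacle.} The delicate point is part~(b): ruling out two simultaneous leaders requires reasoning carefully about the interleaving of a node's candidate role and referee role at phase $\numphaserand$, and about the dispute mechanism (states $C2,C3$) possibly having a \emph{stale} chosen record at the moment the second finalist's request arrives. The argument must show that even a stale record, when the dispute is resolved via Procedure~\ref{alg:decide}, is reconciled against the chosen's \emph{current} phase, so the finalist-as-referee never hands out an approval it would regret — i.e.\ I will phrase (b) as an invariant maintained across all four referee procedures simultaneously and verify it line-by-line, which is where the real work lies.
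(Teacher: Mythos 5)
Your plan is sound and, modulo the deferred case analysis you flag yourself, it proves the lemma — but it takes a noticeably different route from the paper. The paper first reduces to the unique-identifier case (anonymity handled, as you do, by observing that w.h.p.\ the relevant rank is held by at most one node), then proves \emph{existence} by induction over phases: at least one candidate survives each phase, shown by examining the largest-\ID\ candidate of phase $k+1$ and listing the ways it could have retired (each requires a candidate ahead of it or an already-successful one). \emph{Uniqueness} is then argued directly and quite briefly: the largest-\ID\ candidate $u_{max}$ entering phase $\numphaserand$ contacts all of $V$, ``receives an approval from each'', turns \electedSTATE, while every other finalist contacts $u_{max}$, receives \DECLINED\ from it, and turns \nonelectedSTATE; termination follows from the \LEADER\ broadcast, and liveness of waiting is delegated to Observation~\ref{obs:node-start-phase-end-phase}, which is stated without proof. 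You instead prove existence by a global extremal/witness argument (every retirement exhibits a strictly $\gg$-ahead witness, so the candidate with $\gg$-maximal \emph{final} position is never retired — do take final positions, since positions are dynamic, and note a candidate can also retire via a \DECIDE\ comparison or a \LEADER\ message, the latter being harmless since a leader then already exists), and uniqueness by contradiction on two finalists who must meet because $\mathcal{S}=V$ in the last phase, driven by a referee-state invariant across $C0$--$C3$; you also supply the liveness argument (every \REQUEST\ is eventually answered) that the paper only asserts. What each buys: the paper's per-phase induction and direct $u_{max}$ argument are shorter but silently skate over exactly the interleaving/stale-record issues (a finalist's referee may be mid-dispute, its chosen record outdated) that you identify as the ``main obstacle''; your invariant-based uniqueness argument is more robust to those subtleties — in particular it needs the fact that while a candidate that has completed a phase remains unretired, it is its own referee's chosen — but that verification is only sketched in your proposal, i.e., the work you defer is precisely the work the paper leaves implicit. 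Neither is a gap relative to the paper's own level of rigor, so I would count your proposal as a correct alternative proof outline rather than a flawed one.
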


\begin{proof}
Let us consider a network where each node has a unique identifier. We show that exactly one leader is always chosen by arguing that exactly one $\ID$ is chosen as the leader. It is clear to see that if we then consider an anonymous network where each node chooses its $\ID$ from $[1,n^4]$ and nodes do not have access to unique identifiers to break ties, the $\ID$ chosen as the leader will belong to not more than one node w.h.p.

We prove by induction that at least one node is a candidate at the end of phase $i$ for $1 \leq i \leq \numphaserand$. 
For the induction basis, consider, for the sake of the proof, that when a node wakes up, it is in phase 0, but it immediately moves to phase 1. Clearly, this does not change the outcome of the algorithm. Now the lemma holds for phase zero since at least one node is woken up by the adversary initially, and the induction is proven with phase zero as base case.

Assume that the claim holds true up to the end of some phase $k < \numphaserand$. Let $u$ be the candidate with largest \ID\ in phase $k+1$. If $u$ is not retired by any other node in phase $k+1$, then it will be in candidate state $\candidateSTATE$ at the end of phase $k+1$. Hence the claim holds for $k+1$. Otherwise, if $u$ is retired in phase $k+1$, then necessarily either $u$ came into contact with (i) some candidate $u'$ in a higher phase $\PH_{u'}> k+1$ or a referee to such a candidate, (ii) a candidate $u''$ with higher rank $\ID_{u''}>\ID_u$ in the same phase $k+1$ or a referee to such a candidate, or (iii) a candidate $v$ that completed all phases and is still in candidate state $\candidateSTATE$ or a message from such a candidate $v$. In all cases, the induction claim still holds true for $k+1$. 

We subsequently show that exactly one node will change its candidate state to $\electedSTATE$. By the previous claim, at least one candidate completes phase $\numphaserand-1$. Let $u_{max}$ be the node with the largest \ID\ that is still a candidate at the beginning of phase $\numphaserand$. In phase $\numphaserand$, $u_{max}$ will approach all other nodes, receive an approval from each of them, and subsequently change its candidate state to $\electedSTATE$. Any other candidate $v$ will approach $u_{max}$, receive a $\DECLINED$ message from it, and thus change its candidate state to $\nonelectedSTATE$. Thus exactly one node will change its candidate state to $\electedSTATE$ and broadcast a $\langle \cdot, 0, \LEADER\rangle $ message.

Finally, we argue that all nodes terminate. It is clear that if exactly one node changes its candidate state to $\electedSTATE$ and broadcasts a $\langle \cdot, 0, \LEADER\}$ message, then that node terminates the algorithm. All other nodes change their candidate state to $\nonelectedSTATE$ and terminate the algorithm upon receiving this message. Thus, the proof is complete.
\end{proof}
%
We now prove the time complexity bound. First, we prove the following useful lemma about the amount of time one phase for a candidate takes. 

\begin{lemma}\label{lem:time-bound-phase-O1}
If a node $u$ starts some phase $i$ as a candidate, then it exits the phase (either by increasing its phase to $i+1$ or by retiring) within $O(1)$ time.
\end{lemma}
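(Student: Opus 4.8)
The plan is to trace the messages a candidate $u$ sends and receives during a single phase $i$, and to bound the number of "rounds" of communication it must wait through before it either advances to phase $i+1$ or sets $\CANDSTATE \gets \nonelectedSTATE$. Recall that in the asynchronous model every message in transit is delivered within one time unit, so it suffices to show that there is a constant-length chain of message dependencies between $u$ entering phase $i$ and $u$ exiting it. The key point is that $u$ sends its $\REQUEST$ messages to all referees in $\mathcal{S}$ simultaneously (in zero local time), and then simply waits for all replies; so the running time of the phase is governed by the longest time any single referee takes to produce its reply to $u$'s request.

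First I would argue that each referee $r$ replies to $u$'s $\langle \ID_u, \PH_u, \REQUEST\rangle$ message within $O(1)$ time. Inspecting Procedure~\ref{alg:referee}, if $r$ is in state $C0$ or $C1$-with-$u$-as-chosen, or if $r$ declines $u$ outright (states $C1, C2, C3$ with $u$ behind the relevant candidate), then $r$ replies to $u$ immediately upon receiving the request --- one time unit. The only case in which $r$ does not reply to $u$ immediately is when $r$ is in state $C1$ and $u \gg v$ for the current chosen $v$: then $r$ initiates a dispute by sending a $\DECIDE$ message to $v$. Here I would use the fact that $v$, upon receiving a $\DECIDE$ message, runs Procedure~\ref{alg:decide}, which performs only a local comparison and replies to $r$ immediately (one more time unit); then $r$ runs Procedure~\ref{alg:referee2}, which --- in the $C2$ branch --- either approves the contender $u$ or declines $u$, in both cases sending the awaited reply to $u$ right away (a third time unit). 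So along this chain request $\to$ $\DECIDE$ $\to$ reply-to-$\DECIDE$ $\to$ reply-to-$u$, the referee answers $u$ within $3$ time units.

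The main obstacle --- and the step I would treat most carefully --- is ruling out that $u$'s request gets "stuck behind" an already-ongoing dispute at referee $r$, i.e.\ that $r$ is in state $C2$ or $C3$ when $u$'s request arrives because of some \emph{earlier} contender. I would point out that this does not actually delay $u$: in states $C2$ and $C3$, Procedure~\ref{alg:referee} replies to the incoming request \emph{immediately} --- either it declines $u$ (if $u$ is behind the current contender $w$), or it declines the old contender $w$ and installs $u$ as the new contender while switching to $C3$, and in the latter case $u$ does \emph{not} get an approval or decline yet. This last scenario is the genuinely delicate one: $u$ has become a contender, and it must now wait for the dispute between the chosen $v$ and $u$ to be resolved. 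But a dispute involving $u$ as contender is resolved by exactly the chain analyzed above --- $r$ eventually sends $u$ either an $\APPROVED$ or a $\DECLINED$ --- and crucially $r$ can switch its state only $O(1)$ times before it must send $u$ such a message (once $u$ is the contender, the next $\DECIDE$/reply round either approves $u$, declines $u$, or --- only in the $C3\!\to\!C2$ re-dispute branch --- starts one further dispute round, after which $u$ is approved or declined). Hence the total wait for $u$'s reply from any referee is bounded by a constant number of one-time-unit legs. I would also note that $u$ itself acts as one of its own referees, but this is local and free. Combining, every reply $u$ awaits arrives within $O(1)$ time, $u$ then performs a local check of the replies (free), and exits the phase; so the phase takes $O(1)$ time.
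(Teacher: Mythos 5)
Your dependency-chain analysis (request $\to$ $\DECIDE$ $\to$ reply-to-$\DECIDE$ $\to$ reply-to-$u$, with the careful treatment of a candidate that becomes contender in an already-running dispute) is sound and in fact more detailed than what the paper writes down for that part. However, there is a genuine gap in the step where you declare that a constant-length chain of message dependencies suffices because ``every message in transit is delivered within one time unit.'' The paper works in the $\mathcal{CONGEST}$ model, where an edge carries at most one $O(\log n)$-bit message per time unit and messages on an edge are delivered FIFO; consequently a message on your critical chain can sit behind other traffic queued on the same edge, and a chain of $O(1)$ legs only yields $O(1)$ time if each edge on the chain carries only $O(1)$ messages during the phase. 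This is exactly the issue the paper's proof treats as the crux: it reduces the lemma to showing that the congestion on the two relevant edges, $e_{u,r}$ (candidate to referee) and $e_{r,v}$ (referee to its chosen), is $O(1)$ during phase $i$, and then argues (in the proof body and its footnote) that only a constant number of messages can traverse each of these edges in that interval -- the request/reply pair, at most one $\DECIDE$/response exchange per direction, and at most a doubling because $r$ may simultaneously be a candidate that chose $u$ as a referee, or $u$ may be $r$'s chosen from an earlier phase.

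Concretely, the scenario your argument does not exclude is that the edge $e_{u,r}$ or $e_{r,v}$ is simultaneously serving other roles of the same two endpoints (referee traffic for $r$ as a candidate, $\DECIDE$ traffic to $u$ or $v$ as a chosen of earlier phases), so that the replies you count as ``one time unit each'' are delayed behind a queue. To close the gap you would need the paper's counting argument: enumerate the message types that can cross $e_{u,r}$ and $e_{r,v}$ between the start and end of phase $i$ (using that once a $\DECIDE$ is answered, the referee will not generate further $\DECIDE$s to the same chosen on behalf of candidates in lower phases, and a retired or displaced contender receives no further messages on that edge), and conclude that each such edge carries $O(1)$ messages in the interval, so every leg of your chain indeed completes in $O(1)$ time. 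With that addition your argument matches the paper's; without it, the conclusion does not follow in the stated model.
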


\begin{proof}
For any candidate $u$, for each phase $i$, the largest delay until that phase ends occurs when candidate $u$ sends a message to a referee $r$, which in turn sends a message to its current chosen $v$. Subsequently, $v$ sends a reply to $r$, which in turn sends a reply to $u$. If the congestion on both these edges is $O(1)$ messages in the time interval from start to end of phase $i$, then the total time duration of the phase is $O(1)$. Let $e_{u,r}$ and $e_{r,v}$ denote edges between $u$ and $r$ and between $r$ and $v$ resp. We show at most $O(1)$ messages need to be transmitted on both $e_{u,r}$ and $e_{r,v}$ during phase $i$, thus allowing each phase to take $O(1)$ time.

Consider edge $e_{u,r}$. In phase $i$, $u$ can send an invite to $r$ and receive a reply to the invite from $r$. Furthermore, $u$ can also be the chosen for $r$ and receive a $\DECIDE$ message from $r$ and subsequently send back a reply.\footnote{Recall that once some node $u$, in phase $i$, is the chosen of some node $r$ and receives a $\DECIDE$ message, either $u$ is retired or $r$ learns that $u$ is in phase $i$ and will not send any further messages from other candidates in phase $<i$. If another $\DECIDE$ message is subsequently sent to $u$ from $r$, then either $u$ wins (implying it is in a higher phase), or else $u$ is retired and no further messages are sent along the edge.} Additionally, $r$ may also be a candidate with $u$ as its referee or $r$ may be the chosen for $u$, resulting in at most a doubling of messages over the edge. Thus at most $O(1)$ messages are sent across that edge for phase $i$ of $u$. Now consider edge $e_{r,v}$. A similar analysis as above renders $O(1)$ an upper bound on the messages on the edge. Thus, we see that each phase takes $O(1)$ time units.
\end{proof}

In order to bound the running time, we also make use of the following useful property from~\cite{P20}, denoted in the reference as Problem C.2 in Appendix C.6. We slightly modify the statement to suit our needs.

\begin{lemma}{(Problem C.2 in~\cite{P20})}\label{lem:sum-exp-values-bound}
Let $a_1, a_2, \ldots, a_n$ be a sequence of $n$ values. Each value $a_i$ is independently and randomly chosen from a fixed distribution $\mathcal{D}$. Let $m_i = \max \lbrace a_1, a_2, \ldots, a_i \rbrace$, i.e., the maximum of the first $i$ values. Let random variable $Y$ denote the number of times the maximum value is updated, i.e., the number of times $m_i \neq m_{i+1}$. Then $E[Y] = O(\log n)$.
\end{lemma}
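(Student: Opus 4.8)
The plan is to bound, for each $i$, the probability that the running maximum changes at step $i+1$ --- i.e., the probability of the event $m_i \neq m_{i+1}$ --- and then sum over $i$ by linearity of expectation. Writing $Y = \sum_{i=1}^{n-1} \mathbf{1}[m_i \neq m_{i+1}]$, it suffices to show $\Pr[m_i \neq m_{i+1}] \le 1/(i+1)$ for every $i$, since then $E[Y] \le \sum_{i=1}^{n-1} 1/(i+1) = \sum_{k=2}^{n} 1/k \le \ln n = O(\log n)$.

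To bound $\Pr[m_i \neq m_{i+1}]$, I would use the exchangeability of $a_1, \ldots, a_{i+1}$ (which holds because they are i.i.d.), conditioning on the unordered multiset $M = \{a_1, \ldots, a_{i+1}\}$. Conditioned on $M$, the assignment of the $i+1$ values to the positions $1, \ldots, i+1$ is exchangeable, so in particular each position is equally likely to hold any prescribed value of $M$. The event $m_i \neq m_{i+1}$ occurs exactly when $a_{i+1}$ is strictly larger than each of $a_1, \ldots, a_i$, which requires both that $M$ has a strictly largest element (no tie at the top) and that this element sits in position $i+1$. If $M$ has a unique maximum, the conditional probability that it lands in position $i+1$ is $1/(i+1)$; if $M$ has no unique maximum, the event cannot occur. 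Either way $\Pr[m_i \neq m_{i+1} \mid M] \le 1/(i+1)$, and hence $\Pr[m_i \neq m_{i+1}] \le 1/(i+1)$ unconditionally.

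Plugging this into the first step yields $E[Y] \le \sum_{i=1}^{n-1} 1/(i+1) = H_n - 1 = O(\log n)$, completing the argument. The lemma is not hard; the one point to be careful about is the handling of ties, which arise when $\mathcal{D}$ is a discrete distribution and the maximum of the first $i+1$ samples need not be unique. The argument above accommodates this automatically --- a non-strict maximum never updates $m$ --- which is precisely why the conclusion is phrased with $\le$ rather than $=$ and why only exchangeability, not any finer structure of $\mathcal{D}$, is needed.
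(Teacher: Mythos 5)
Your proof is correct. Note that the paper does not actually prove this lemma at all: it is imported verbatim (with a slight rewording) from~\cite{P20}, where it appears as an exercise, so there is no in-paper argument to compare against. Your argument is the standard ``record values'' proof: write $Y=\sum_{i=1}^{n-1}\mathbf{1}[m_i\neq m_{i+1}]$, observe that the event $m_i \neq m_{i+1}$ requires $a_{i+1}$ to be the \emph{strict} maximum of $a_1,\ldots,a_{i+1}$, and use exchangeability of the i.i.d.\ samples to bound this probability by $1/(i+1)$ (the events ``unique maximum sits at position $j$'' for $j=1,\ldots,i+1$ are disjoint and equiprobable, and ties only help, since a tied maximum never triggers an update). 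Summing the harmonic series gives $E[Y]\le H_n-1=O(\log n)$. Your explicit handling of ties is a point worth keeping, since the lemma is stated for an arbitrary fixed distribution $\mathcal{D}$ (and in the paper's application the ranks are drawn from the discrete set $[1,n^4]$, where ties genuinely occur); the inequality direction you give is exactly what the application in Lemma~\ref{lem:running-time-rand} needs.
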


\begin{lemma}\label{lem:running-time-rand}
The running time of the algorithm is $O(\log^2 n)$  with high probability. 
\end{lemma}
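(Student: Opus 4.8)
The plan is to bound the running time by tracking the progress of the single ``front-runner'' candidate and showing that (i) the total time any candidate remains active is $O(\log^2 n)$, and (ii) the final announcement broadcast reaches everyone in $O(1)$ additional time. By Lemma~\ref{lem:time-bound-phase-O1}, each phase that a candidate actually executes takes $O(1)$ time, so if I can show that along any causal chain of candidates there are at most $O(\log n)$ ``active phases'' worth of delay, I will be done — but this is too naive, because many candidates run in parallel and the adversary controls message delays. So instead I would argue directly about elapsed real time. Let me define, for each $t \ge 0$, the set of candidates still in state \candidateSTATE\ at time $t$, and let $\PH^{\max}(t)$ be the largest phase index among them (or among those whose request messages are in transit). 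The key is to show that $\PH^{\max}$ cannot stall: once the system has been running for $\Omega(1)$ time, the candidate currently achieving the max phase either retires or advances its phase within $O(1)$ more time (Lemma~\ref{lem:time-bound-phase-O1} applied to it, noting that a request it sends in phase $i$ cannot be declined by a referee holding a chosen of phase $\ge i+1$ unless such a higher-phase candidate exists, contradicting maximality — so it advances). Since there are only $\numphaserand = O(\log n)$ phases, the front-runner reaches phase $\numphaserand$ within $O(\log n)$ time after it first becomes the unique max-phase leader; but candidates wake up at adversarial times, so I must also account for late-waking candidates. A newly woken candidate starts at phase $1$, so it can only ``reset'' $\PH^{\max}$ downward transiently; however, a phase-$1$ candidate will be declined by any referee whose chosen is ahead of it (which, once the front-runner is deep, is likely), or it triggers at most one dispute before retiring. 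The honest accounting is: total time $\le$ (time for front-runner to climb $O(\log n)$ phases) $+$ (extra delays caused by disputes), where the dispute count is controlled by Lemma~\ref{lem:sum-exp-values-bound}.

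**The role of randomness and Lemma~\ref{lem:sum-exp-values-bound}** is where the second logarithmic factor enters. I would argue that a referee $r$ enters the dispute states $C2/C3$ only when it sees a new candidate strictly ahead of its current chosen; the number of times $r$'s chosen is ``upgraded'' over the whole execution is at most the number of record-updates in a sequence of random ranks, which by Lemma~\ref{lem:sum-exp-values-bound} is $O(\log n)$ in expectation (and $O(\log n)$ w.h.p.\ by a standard concentration/union argument, since the ranks in $[1,n^4]$ are distinct w.h.p.\ and the update count is dominated by a sum of independent indicators). Summing over all $n$ referees, the total number of disputes is $O(n\log n)$ — but that is a message bound, not a time bound. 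For time, what matters is that along the single causal path that determines the finish time, the front-runner suffers at most $O(\log n)$ rank-upgrade-induced disputes during each phase transition, and each dispute costs $O(1)$ time (a $\DECIDE$ round-trip, per the same congestion argument as Lemma~\ref{lem:time-bound-phase-O1}). So the front-runner's climb through $\numphaserand = O(\log n)$ phases, each phase involving a wait on up to $O(\log n)$ referee responses whose latency is inflated by at most $O(\log n)$ interleaved disputes' worth of congestion — actually by Lemma~\ref{lem:time-bound-phase-O1} each phase is already $O(1)$ — gives $O(\log n)\cdot O(1) = O(\log n)$, and I must locate the missing $\log n$ factor carefully.

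**Locating the extra factor:** the point is that during phase $i$ a candidate waits for \emph{all} $\min\{10\cdot 2^i,\lceil\sqrt{4n\log n}\rceil\}$ of its referees, and a referee $r$ contacted by this candidate might currently be busy in a dispute with a \emph{different} chosen candidate $v'$, and $r$ will not respond to our candidate until it resolves (or at least classifies) that interaction. Since $r$ may cycle through its states $C0\to C1\to C2\to\cdots$ a number of times bounded by its rank-upgrade count $O(\log n)$ (Lemma~\ref{lem:sum-exp-values-bound}), our candidate may in the worst case wait $O(\log n)$ time on $r$ — giving $O(\log n)$ per phase and $O(\log^2 n)$ overall. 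So the structure of the final argument is: (a) fix the eventual leader $u_{\max}$, or rather the candidate that is the max-phase front-runner at each moment; (b) show its total wait across all phases is $\sum_{i=1}^{\numphaserand} (\text{per-phase wait})$; (c) bound per-phase wait by $O(\log n)$ w.h.p., using Lemma~\ref{lem:sum-exp-values-bound} to cap how many times any single referee it contacts can be ``reused'' in distinct disputes before stabilizing, plus FIFO/congestion to convert state-change counts into time; (d) with $\numphaserand = O(\log n)$ phases this yields $O(\log^2 n)$; (e) add $O(1)$ for the final \LEADER\ broadcast. Then take a union bound over the $O(n)$ referees (and over the distinctness-of-ranks event) to make the whole thing hold with high probability.

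**The main obstacle** I anticipate is step (c): rigorously converting ``referee $r$ changes state $O(\log n)$ times'' into ``our candidate waits $O(\log n)$ time for $r$'s reply,'' because the adversary controls message delays and FIFO order only guarantees in-order delivery per edge, not bounded buffering. The clean way around this is to observe that on the specific edge $e_{u,r}$ between the front-runner $u$ and referee $r$, the number of messages ever sent is $O(\log n)$ (each of $r$'s state changes generates $O(1)$ edge traffic relevant to $u$, and $u$ itself sends $O(1)$ per phase), so by the one-time-unit-per-message convention the reply is delivered within $O(\log n)$ time of being enqueued; combined with $r$ processing $u$'s request only after finishing its current dispute (itself an $O(1)$ round-trip by the congestion argument), the total is $O(\log n)$ per phase. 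I would also need to handle the subtlety that $u$ might itself be promoted from chosen to retired by a \DECIDE\ message mid-phase — but then $u$ stops, and the induction in Lemma~\ref{lem:exactly-one-leader} guarantees some higher-ranked candidate carries on, so we simply re-anchor the argument on the new front-runner, losing only $O(1)$ time per re-anchoring and $O(\log n)$ re-anchorings total by Lemma~\ref{lem:sum-exp-values-bound} again.
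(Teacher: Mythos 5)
There is a genuine gap: you misidentify where the second $\log n$ factor comes from. Your main mechanism --- that a candidate's per-phase wait can be inflated to $O(\log n)$ because a contacted referee may cycle through its states $O(\log n)$ times --- directly contradicts Lemma~\ref{lem:time-bound-phase-O1}, which shows (via the $O(1)$-congestion argument on $e_{u,r}$ and $e_{r,v}$) that \emph{every} phase a candidate executes completes in $O(1)$ time, regardless of how many times its referees have changed state over the whole execution. So your step (c) does not survive, and your accounting ``$O(\log n)$ per phase $\times$ $\numphaserand$ phases'' is built on a latency blow-up that the algorithm does not actually suffer. The true bottleneck is not intra-phase latency but the \emph{succession of front-runners}: the current leading candidate can be retired by a same-phase candidate of higher rank (a freshly woken node whose random $\ID$ beats it), and each such hand-off costs $\Theta(1)$ time. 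Your ``re-anchoring'' remark gestures at this, but you then claim only $O(\log n)$ re-anchorings \emph{in total} by Lemma~\ref{lem:sum-exp-values-bound}; that is not justified, because the record-value lemma only applies while the rank along the hand-off chain is forced to increase, i.e.\ within a stretch where the phase does not advance. When a hand-off happens because the new winner is in a strictly higher phase, its rank can be arbitrarily small, so the ``running maximum'' resets, and the record argument must be restarted. The correct count is therefore: at most $\numphaserand=O(\log n)$ phase-advancing hand-offs, and between two consecutive ones at most $O(\log n)$ rank-advancing hand-offs w.h.p.\ (record lemma plus Chernoff over the at most $n-1$ newly woken candidates), giving $O(\log^2 n)$ hand-offs, each worth $O(1)$ time --- which is exactly the paper's two-level argument via the encounter DAG and the bit string of $0$-bits (phase increase) and $1$-bits (same phase, higher rank).

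A second missing piece is the quantity your front-runner framing never pins down: the elapsed time between the first adversarial wake-up and the wake-up of the \emph{eventual leader} $C_\ell$. Since the adversary wakes nodes adaptively, $C_\ell$ may enter late, and the ``max-phase cannot stall'' observation does not by itself bound the algorithm's running time in terms of $C_\ell$'s schedule. The paper handles this by building the directed acyclic graph of ``who retired whom,'' taking a path from a source through the first-woken node to $C_\ell$, and charging the whole pre-wake-up interval to the encounters and phase progress along that path ($O(\numphaserand+|B|)$ with $|B|=O(\numphaserand\log n)$ w.h.p.), after which $C_\ell$ needs only $O(\numphaserand)$ further time by Lemma~\ref{lem:time-bound-phase-O1}. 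Without some analogue of this chain argument, your proof bounds the time an already-awake candidate needs to climb the phases, but not the running time of the algorithm as defined (from the first wake-up to the last message).
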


\begin{proof}
For any given execution of the algorithm, eventually one candidate, $C_{\ell}$, woken up by the adversary, becomes the leader. We show two properties for any such $C_{\ell}$. First, once $C_{\ell}$ is woken up by the adversary, the algorithm takes an additional at most $O(\numphaserand)$ time until $C_{\ell}$ broadcasts that it is the leader and the algorithm terminates for every node. Second, we bound the time from when the first node is woken up by the adversary to when candidate $C_{\ell}$ is woken up  as $O(\numphaserand \cdot \log n)$ with high probability.

Once $C_{\ell}$ is woken up, it must complete $\numphaserand$ phases before it broadcasts that it is the leader. By Lemma~\ref{lem:time-bound-phase-O1}, each of its phases takes $O(1)$ time to complete. Thus, once $C_{\ell}$ is woken up by the adversary, it takes at most $O(\numphaserand)$ time until all nodes terminate.

Now, we bound the time until $C_{\ell}$ is woken up. For this proof, we say that two candidates $u$ and $v$ \textit{encounter} each other at time $t$ when a referee (which could be either $u$ or $v$ itself) is made aware of both $u$ and $v$ for the first time. If, as a result of the encounter, $v$ is retired, then we call $u$ the \textit{winner} of the encounter and $v$ the \textit{loser}. Note the following observation.

\begin{observation}\label{obs:time-complexity-rand}
Consider some execution of the algorithm where a candidate $u$ while in phase $k$ retires candidate $w$. Subsequently, $u$ reaches phase $i$ and is retired by another candidate $v$ in phase $j$. When $u$ and $v$ encounter each other, then either $j > i$ or $v$ has a larger $\ID$ than $u$ and is also in phase $i$. Furthermore, from the time $u$ retired $w$, at most $O(i-k)$ time units have passed until the encounter between $u$ and $v$ (by Lemma~\ref{lem:time-bound-phase-O1}) and an additional $O(1)$ time units pass until $u$ retires.
\end{observation}

Observation~\ref{obs:time-complexity-rand} has the following implications. Either the winner of an encounter has a larger $\ID$ than the loser or else the phase number of the winner is larger than that of the loser when they encounter one another. Furthermore, the amount of time that passes between two encounters involving a common candidate can be bounded by the difference in phases of that candidate plus $O(1)$ time units. 

We now construct a new graph where the nodes are the subset of the nodes that are woken up by the adversary (and which become candidates). There is an edge from candidate $C_i$ to $C_j$ if candidate $C_i$ encounters and subsequently is retired by $C_j$. Notice that this graph is a directed acyclic graph since once a candidate $C_i$ is retired by some other node $C_j$, $C_i$ cannot go on to retire $C_j$ or any candidate that retires $C_j$. Notice also that a candidate may encounter multiple other candidates, resulting in nodes with an in degree or out degree (or both) that is $>1$. Finally, note that candidates that do not retire any other candidates are sources and $C_{\ell}$ is a sink. Consider a path in the graph from a source to $C_{\ell}$ such that the node first woken up by the adversary is one of the nodes in this path. If there are multiple such paths, choose one arbitrarily. Label the nodes from the source to the node just before $C_{\ell}$ as $C_1, C_2, \ldots, C_w$. Let $C_1$ be in some phase $p$ at the time of its encounter with $C_2$. Let $R$ be the time that pass from the first encounter until the final encounter in the chain. Then, the total time from when the first node was woken up until $C_{\ell}$ was woken up is upper bounded by $O(p + R) = O(\numphaserand + R)$. We now bound the value of $R$.

With each of the candidates $C_i$, $1 \leq i \leq w$, associate a bit $b_i$ which indicates how $C_i$ was retired. Specifically, $b_i$ is set to $0$ if $C_{i+1}$ was in a higher phase than $C_i$ at the time of the encounter.\footnote{Note that we are referring to the actual phases $C_i$ and $C_{i+1}$ are in at this time, not necessarily the phase numbers stored in the referee when it was first aware of both of them, as one of these values may be outdated. Also note that if $i=w$, then $C_{i+1}$ refers to $C_{\ell}$.} Bit $b_i$ is set to $1$ if $C_{i+1}$ has a higher $\ID$ than $C_i$ and is in the same phase as it at the time of the encounter. By Observation~\ref{obs:time-complexity-rand} and its implications, we see that every bit $b_i$, $1 \leq i \leq w$ is set to $0$ or $1$.

Consider the bit string $B = b_1 b_2 \ldots b_w$. We will show that $R$ is upper bounded $O(\numphaserand + |B|)$. In order to bound the size of $B$, we first bound the number of $0$'s that can be present in $B$, and then we bound the number of $1$'s that can be present between any two $0$'s in $B$.

Denote by $P_i$ the phase of $C_i$ when it was retired. Observe that, regardless of whether $b_i = 0$ or $1$, the phase $P_{i+1}$ of the subsequent candidate $C_{i+1}$ can never be less than that of the candidate $C_i$ it just retired. Furthermore, when $b_i = 0$, $P_{i+1} > P_i$. Thus, there can be at most $\numphaserand$ bits set to $0$ in $B$. We now bound the number of bits set to $1$ between any two bits that are set to $0$.

Between any two bits set to $0$, at most $n-1$ nodes can be woken up by the adversary to trigger an encounter leading to a bit being set to $1$. An encounter where a bit is set to $1$ involves the $\ID$ of the awakened node being higher than that of the currently considered candidate, resulting in the higher $\ID$ node becoming the currently considered candidate. By Lemma~\ref{lem:sum-exp-values-bound}, we see that on expectation, $O(\log n)$ such encounters can thus be triggered. Applying a simple Chernoff bound, we see that $O(\log n)$ is in fact a high probability upper bound on the number of such encounters, and by extension the number of bits set to $1$ between any two bits set to $0$. Since there are at most $\numphaserand$ bits set to $0$, $|B| = O(\numphaserand \cdot \log n)$.

Each encounter of  two candidates contributes $O(1)$ time towards the running time. Furthermore, we must account for the time between encounters as well. Between any two encounters, the phase of a candidate may increase. We have already seen that if candidate $C_i$ is in phase $P_i$, then for subsequent candidates $C_j$ where $j>i$, $P_j \geq P_i$. From Lemma~\ref{lem:time-bound-phase-O1}, we see that for any candidate a phase takes $O(1)$ time. Therefore, the total number of time units between all encounters is at most $O(\numphaserand)$ time units. Thus, $R$ can be upper bounded by $O(\numphaserand + |B|) = O(\numphaserand \cdot \log n)$ and the total running time is $O(\numphaserand + R) = O(\log^2 n)$ time with high probability.
\end{proof}


We now prove that the message complexity of the randomized algorithm is $O(n)$ with high probability. We first show that the number of candidates that can participate in every phase is reduced by a factor of four from one phase to the next up to phase $\rho = \numphaserand - \lceil \log \log n \rceil - 5$ with high probability. 

\begin{lemma}\label{lem:limit-num-candidates-per-phase-rand}
The total number of candidates that participate in phase $i$ of the algorithm is at most $\lceil n/4^{i-1} \rceil$ with high probability for $1 \leq i \leq  \rho$.
\end{lemma}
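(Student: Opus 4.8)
I want to bound the number of candidates that survive into phase $i$ by $\lceil n/4^{i-1}\rceil$ with high probability, for $i$ up to $\rho = \numphaserand - \lceil \log\log n\rceil - 5$. The argument is by induction on $i$. The base case $i=1$ is trivial since there are at most $n$ nodes total, hence at most $n$ candidates. For the inductive step, I would assume at most $N_i := \lceil n/4^{i-1}\rceil$ candidates participate in phase $i$ and show that at most $N_{i+1} = \lceil n/4^i\rceil = \lceil N_i/4\rceil$ of them survive to phase $i+1$, w.h.p.

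\textbf{The core combinatorial idea.} A candidate $u$ advances from phase $i$ to phase $i+1$ only if \emph{every} one of the $s_i := \min\{10\cdot 2^i, \lceil\sqrt{4n\log n}\rceil\}$ referees it samples uniformly at random approves it (no $\DECLINED$ reply). The key observation is that for $i \le \rho$ we have $s_i = 10\cdot 2^i$ (one needs to check $10\cdot 2^\rho \le \sqrt{4n\log n}$, which follows from the definition of $\rho$ and $\numphaserand$ — this is the routine calculation I would relegate to a footnote). Now fix the set $A$ of phase-$i$ candidates, $|A|\le N_i$, and suppose for contradiction that more than $N_{i+1}$ of them advance. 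Each advancing candidate has a "certificate" of $s_i$ referees that all approved it. The plan is to argue that two advancing candidates $u, v$ whose referee sets $\mathcal{S}_u, \mathcal{S}_v$ intersect cannot \emph{both} advance without one of them being declined: whichever referee $r$ lies in $\mathcal{S}_u\cap\mathcal{S}_v$ receives requests from both $u$ and $v$ in phase $i$, and by the referee procedure it keeps only the one that is "ahead" ($\gg$) among the candidates it ever sees at this phase level, declining the other (possibly after a dispute, but the dispute logic in Procedures~\ref{alg:referee}--\ref{alg:referee2} still ends with one of two same-phase candidates declined once phases are equalized). So the referee sets of the advancing candidates must be pairwise disjoint — giving more than $N_{i+1}\cdot s_i$ distinct referees, which I then want to show is impossible or improbable.

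\textbf{Turning disjointness into a probability bound.} Since there are only $n$ nodes, pairwise-disjoint referee sets of size $s_i$ immediately cap the number of advancing candidates at $n/s_i = n/(10\cdot 2^i)$; but that alone gives only $n/(10\cdot 2^i)$, not $n/4^i$, so disjointness alone is not enough and I need the randomness. Instead I would fix the (at most $N_i$) phase-$i$ candidates and, for a fixed target subset $T\subseteq A$ of size $N_{i+1}+1$, bound the probability that all candidates in $T$ have pairwise-disjoint (equivalently: "collision-free") referee choices. The probability that $N_{i+1}+1$ independently chosen $s_i$-subsets of $[n]$ are pairwise disjoint is at most $\prod_{j=1}^{N_{i+1}}\binom{n - j\,s_i}{s_i}/\binom{n}{s_i} \le \prod_{j=1}^{N_{i+1}}(1 - j s_i/n)^{s_i} \le \exp\!\big(-\tfrac{s_i}{n}\sum_{j=1}^{N_{i+1}} j s_i\big) = \exp\!\big(-\Theta(s_i^2 N_{i+1}^2/n)\big)$. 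Plugging $s_i = 10\cdot 2^i$, $N_{i+1}\approx n/4^i$ gives $s_i^2 N_{i+1}^2/n = \Theta(4^i \cdot n^2/16^i / n) = \Theta(n/4^i) = \Theta(N_{i+1})$, which is $\Omega(\log n)$ precisely because $i \le \rho$ keeps $N_{i+1} = \lceil n/4^i\rceil = \Omega(\log n)$ (again the role of the $\lceil\log\log n\rceil + 5$ slack in $\rho$). A union bound over the at most $\binom{N_i}{N_{i+1}+1} \le 2^{N_i}$ choices of $T$ — here I may need to be careful and instead union-bound over the $\le n$ actual candidates, or note $2^{N_i} = 2^{O(N_{i+1})}$ is dominated — kills the failure probability. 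Finally a union bound over the $O(\log n)$ phases preserves "with high probability."

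\textbf{Main obstacle.} The delicate point is the claim that two advancing same-phase candidates sharing a referee is impossible: because of asynchrony, the referee $r$ may see $u$ and $v$ at \emph{different} phases (one request arrives while $u$ is in phase $i$ but $v$ in phase $i' < i$), and disputes can be resolved by a stale $\Chosen$ reply. I need to argue carefully, using Observation~\ref{obs:node-start-phase-end-phase} and the dispute mechanics (Procedures~\ref{alg:referee}, \ref{alg:decide}, \ref{alg:referee2}), that if both $u$ and $v$ ultimately reach phase $i+1$, then $r$ must at some point hold requests from both at phase $\ge i$, and the $\gg$-comparison logic guarantees at least one of them gets a $\DECLINED$ at phase $i$ — contradicting that both advanced. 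This "a shared referee forces a decline among co-advancing candidates" lemma is the real content; the probability computation afterward is then essentially a birthday-paradox / occupancy estimate and is routine.
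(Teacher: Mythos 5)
Your probabilistic accounting (the birthday/occupancy estimate, the union bound over subsets $T$, and the observation that $i\le\rho$ keeps $\lceil n/4^i\rceil=\Omega(\log n)$ so the failure probability is $n^{-\Omega(1)}$ per phase) is fine as far as it goes, but the deterministic claim it rests on --- that two phase-$i$ candidates whose referee sets intersect cannot both begin phase $i+1$, so the referee sets of all advancing candidates are pairwise disjoint --- is false, and you yourself only gesture at it as ``the real content.'' Concretely: let $u$ (rank $5$) send its phase-$i$ request to $r$ first; $r$ is in state $C0$, records $\Chosen=\langle 5,i\rangle$, and sends $\APPROVED$, so $u$ (approved by all its referees) begins phase $i+1$ and thus \emph{participates} in phase $i+1$. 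Later $u$ is retired at some phase $\ge i+1$ (e.g., a phase-$(i+1)$ referee of $u$ already has a stronger chosen and sends $u$ a $\DECLINED$). Only now does $v$ (rank $7$, still in phase $i$) contact $r$: since $\langle 7,i\rangle\gg\langle 5,i\rangle$, $r$ opens a dispute and sends a $\DECIDE$ to $u$; but $u$ is in state $\nonelectedSTATE$, so by Procedure~\ref{alg:decide} it replies that the contender $\WINS$, whereupon $r$ sends $v$ an $\APPROVED$. If $v$'s remaining referees are untouched, $v$ also begins phase $i+1$. Both $u$ and $v$ participated in phase $i+1$ while sharing the referee $r$ in phase $i$, so disjointness is not necessary for co-advancement; the lemma counts participation, not survival, and with adversarial (adaptive) scheduling such interleavings cannot be excluded. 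Your suggested fix (``$r$ must at some point hold requests from both at phase $\ge i$ and the $\gg$-comparison forces a $\DECLINED$ at phase $i$'') is exactly what this scenario violates: the comparison is resolved by a stale/retired chosen, and the ``loser'' is only eliminated after it has already advanced.

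Because of this, the whole reduction to ``pairwise disjoint certificates'' collapses, and it is not a presentational gap but a structural one: any argument of this shape must either charge retirements to a fixed reference set of strong candidates or otherwise avoid needing an all-pairs deterministic exclusion among survivors. The paper does the former: in each phase it splits the phase-$k$ candidates by rank into the top $1/16$ and the bottom $15/16$, pessimistically assumes no top candidate retires, and shows that each \emph{individual} bottom candidate hits the (many) referees touched by top candidates with probability at least $1-e^{-25/4}$ and is retired; a Chernoff bound on the number of surviving bottom candidates plus a union bound over phases then yields the factor-$4$ decay. That per-candidate, constant-probability argument needs only intersection with the top group's requests (not mutual disjointness among survivors), which is also why it can beat the $n/(10\cdot 2^i)$ cap you correctly noted disjointness alone would give. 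If you want to salvage your route, you would need to prove a correct asynchronous analogue of your exclusion claim --- for instance, that a phase-$i$ candidate sharing a referee with a \emph{higher-ranked} phase-$i$ candidate that is never retired cannot advance --- which is essentially the pivot the paper makes, rather than the symmetric all-pairs statement you assumed.
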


\begin{proof}
We prove the claim by induction. Initially, even if the adversary wakes up all nodes, at most $n$ candidates participate in phase $1$. Assume that the claim holds true until some phase $k < \rho$. We prove that the claim holds in phase $k+1$ when the number of candidates participating in phase $k$ is upper bounded by $\lceil n/4^{k-1} \rceil$.

Notice that if $\leq \lceil n/4^k \rceil$ candidates participate in phase $k$, then the claim holds immediately for phase $k+1$. Now, let us assume that the number of candidates participating in phase $k$ lies in $(\lceil n/4^k \rceil, \lceil n/4^{k-1} \rceil]$. 

Organize the candidates in increasing order of their $\ID$s. Group the top one-sixteenth of these candidates into the set $\textit{top}$ and group the remaining candidates into the set $\textit{bottom}$. We ignore the retirement of candidates from $\textit{top}$ in phase $k$ by assuming that none of them are retired in this phase
and show that a sufficient number of candidates from $\textit{bottom}$ are retired for the claim to hold in phase $k+1$.\footnote{Notice that the claim is an upper bound on the number of candidates in each phase. By showing that the claim holds when none of the candidates from $\textit{top}$ is retired, it is easy to see that the claim also holds when at least one candidate from $\textit{top}$ is retired.} In this phase, there are at least $1/16 \cdot \lceil n/ 4^k \rceil$ candidates in $\textit{top}$, each of which makes $10 \cdot 2^k$ requests.\footnote{Notice that for all phases $i$ that the claim applies to, $10 \cdot 2^i \leq \lceil \sqrt{4n \log n} \rceil$. Thus, any candidate in one of these phases will make $10 \cdot 2^i$ requests.}
Thus, altogether, candidates from $\textit{top}$ make at least 
$m_0 = (\lceil n/4^k \rceil /16) \cdot 10 \cdot 2^k \geq 5 n / 2^{k+3}$ requests uniformly at random.\footnote{Note that the requests by a candidate in a single phase are made without repetition, i.e. a candidate does not send more than one request along the same edge in a single phase. During the subsequent analysis, we consider them to be made with repetition so as to simplify the analysis. This assumption only decreases the probability of a candidate from $\textit{bottom}$ inviting a referee that was approached by one of the $\textit{top}$ candidates and thus this simplifying assumption is acceptable.}

Consider a candidate $u$ from $\textit{bottom}$. Let $r$ be some specific referee approached by $u$. 

Call $r$ {\em top-free} if none of the candidates from $\textit{top}$ has approached $r$ and $r$ itself is not in $\textit{top}$. Then the probability that $r$ is top-free is at most $p_0 = (1-1/n)^{m_0} \cdot 15/16 \leq (1-1/n)^{m_0}$. Note that $u$ will be retired if even one of its requests is to a non-top-free referee.
Therefore, the probability that $u$
is not retired in this phase
is at most $p_0^{10\cdot 2^k} \le (1-1/n)^{(5 n/2^{k+3}) \cdot 10 \cdot 2^k} \leq e^{-25/4}$.
Thus, denoting the number of candidates from $\textit{bottom}$ that are not retired by $NR_B$, we have 
\begin{align*}
\Exp{NR_B} &\leq (15/16) \cdot \lceil n/ 4^{k-1} \rceil \cdot e^{-25/4} \\
&\leq (15/2) e^{-25/4}\cdot n/4^{k}  \\
&\leq (1/16) \cdot (n/ 4^k).
\end{align*}
As $NR_B$ is the sum of independent Bernoulli trials, we can apply a Chernoff bound (second bound of Theorem~4.4 in \cite{MU17}). When $k < \rho$, the probability that $NR_B$ exceeds twice the expected value is upper bounded by $1/n^3$.
Thus the total number of candidates that can participate in phase $k+1$ with high probability is at most $(1/16) \cdot \lceil n/4^{k-1} \rceil + (2/16) \cdot (n/ 4^k) \leq \lceil n/ 4^k \rceil$.

Note that for each phase, the upper bound on the number of candidates who are not retired holds with probability $1 - 1/n^3$, assuming that the bound held in the previous phase. Applying a union bound over all phases of the induction, we see that this upper bound holds for phase $\rho$ with probability $1 - O(1/n^2)$, i.e.\ w.h.p., and for each previous phase with a larger probability. Thus the claim is true for all $1 \leq i \leq \rho$ with high probability.
\end{proof}

We next show that in the final phase of the algorithm, only a single candidate will not be retired with high probability.

\begin{sloppypar}
\begin{lemma}\label{lem:one-candidate-survives-final-phase-rand}
At the end of phase $\numphaserand-1$, exactly one node will be in candidate state $\candidateSTATE$ with high probability.
\end{lemma}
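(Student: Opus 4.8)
My plan is to combine Lemma~\ref{lem:exactly-one-leader}, which already gives that \emph{at least} one candidate survives through phase $\numphaserand-1$, with a birthday-paradox argument showing that \emph{at most} one does, w.h.p.; together these give ``exactly one''. Throughout I would work on the (w.h.p.) event that all chosen ranks are distinct, so that $\gg$ is a strict total order on candidate positions. The structural fact driving the argument is that phase $\numphaserand-1$ is the last phase before the referee set is set to all of $V$, and in it each candidate picks a uniformly random $s$-subset of its neighbors where $s:=\lceil\sqrt{4n\log n}\rceil$ --- indeed $10\cdot 2^{\numphaserand-1}\ge 10\cdot 2^{\log\sqrt{4n\log n}}=10\sqrt{4n\log n}\ge s$, so the $\min$ in Procedure~\ref{alg:candidate} evaluates to $s$ --- and $s$ is calibrated precisely so that two uniformly random $s$-subsets of $V$ intersect with probability $1-n^{-\Omega(1)}$.

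For the probabilistic step I would fix two distinct nodes $u,v$ and bound the probability of the bad event that both reach phase $\numphaserand-1$ yet their phase-$(\numphaserand-1)$ referee sets are disjoint. Conditioning on all randomness used in phases $1,\dots,\numphaserand-2$ (which already determines whether $u$ and $v$ reach phase $\numphaserand-1$), the two phase-$(\numphaserand-1)$ referee sets are independent uniformly random $s$-subsets, so this probability is at most $\bigl(1-\tfrac{s-1}{n-1}\bigr)^{s}\le e^{-s(s-1)/(n-1)}\le n^{-c'}$ for a constant $c'>4$, using $s\ge\sqrt{4n\log n}$. (The oblivious-adversary assumption is what makes these referee choices uniform and independent of the edge mapping the adversary has fixed.) A union bound over all $\binom{n}{2}<n^2$ pairs then shows that, with probability $1-n^{-\Omega(1)}$, \emph{every} pair of candidates that both reach phase $\numphaserand-1$ shares at least one common referee in that phase; alternatively one can first invoke Lemma~\ref{lem:limit-num-candidates-per-phase-rand} together with the fact that the set of participating candidates is non-increasing in the phase index to cut this union down to $O(\log^2 n)$ pairs.

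It then remains to show that a shared referee forces a retirement, and this is where I expect the real work to lie: if candidates $u$ and $v$ both send a phase-$(\numphaserand-1)$ request to the same referee $r$, then at least one of them is in state $\nonelectedSTATE$ by the end of phase $\numphaserand-1$. The plan is to establish, by induction over the events $r$ processes, the invariant that at every quiescent point of the execution, among all candidates that have ever sent $r$ a request, at most one is un-retired, and if one exists it is $r$'s current $\Chosen$. Walking through Procedures~\ref{alg:referee} and~\ref{alg:referee2} and states $C0$--$C3$, any such candidate other than the current chosen is either declined on arrival, or becomes a contender and is declined when it loses its dispute (or when a stronger contender supersedes it), or becomes the new chosen --- but the last event happens only when the outgoing chosen answered a $\DECIDE$ with a $\WINS$ for the contender, i.e.\ it has already set itself to $\nonelectedSTATE$. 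The corner cases to check are that $r$ may hold a stale (lower) phase for a candidate, that a candidate may progress into phase $\numphaserand$ or retire via a different referee between sending its request and $r$ processing it, and that $r$ itself may be the woken-up recipient of the request and/or one of $u,v$; none of these should break the invariant. Given the invariant, if both $u$ and $v$ survived phase $\numphaserand-1$ then each would have to be $r$'s final chosen, which is impossible; hence one of them receives a $\DECLINED$ (as a direct reply or as a dispute loser) while still waiting out phase $\numphaserand-1$ in Procedure~\ref{alg:candidate}, and so retires in that phase.

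Putting the pieces together: on the w.h.p.\ event that every pair of candidates reaching phase $\numphaserand-1$ shares a referee there, each such pair contains a member that does not survive phase $\numphaserand-1$, so at most one candidate survives (two survivors would form a pair with no retiree); with Lemma~\ref{lem:exactly-one-leader} this gives exactly one, w.h.p. The main obstacle is the referee-state-machine invariant of the third paragraph; the collision bound is routine.
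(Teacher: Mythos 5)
Your proposal is correct in substance and shares the paper's probabilistic core --- each phase-$(\numphaserand-1)$ candidate picks $\lceil\sqrt{4n\log n}\rceil$ referees, two such sets are disjoint with probability about $n^{-4}$, and a union bound (using Lemma~\ref{lem:limit-num-candidates-per-phase-rand} to limit how many candidates can reach this phase) kills all disjointness events w.h.p. --- but the combinatorial wrapper is genuinely different. The paper singles out the candidate $u_{max}$ with the largest rank in phase $\numphaserand-1$, argues every other candidate in that phase shares a referee with $u_{max}$ w.h.p., and concludes that $u_{max}$ retires each of them, so $u_{max}$ alone survives; it needs no appeal to Lemma~\ref{lem:exactly-one-leader} and no reasoning about the referee state machine. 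You instead argue symmetrically over all pairs: any two candidates reaching phase $\numphaserand-1$ share a referee w.h.p., a shared referee forces at least one of the pair to retire before completing the phase, and existence of a survivor comes from the induction inside Lemma~\ref{lem:exactly-one-leader} (note this is a claim from that lemma's proof, not its statement). Your version buys robustness to asynchronous interleavings: the paper's assertion that $u_{max}$ wins every collision is delicate, since a colliding candidate that already finished phase $\numphaserand-1$ and advanced can win the ensuing dispute against $u_{max}$; your symmetric claim (``at least one of the two retires'') sidesteps the question of which one wins, which is all the count needs. The price is that the bulk of your work is the referee-state-machine invariant, which you only sketch and which needs care in its statement: during a pending dispute both the chosen and the contender can simultaneously be un-retired, so the invariant must be asserted only once $r$ has answered a candidate's request (your ``quiescent point''), and the final step should explicitly use that a candidate cannot exit phase $\numphaserand-1$ before $r$'s reply reaches it, and that whenever $r$ installs a new chosen the outgoing chosen has already set itself to $\nonelectedSTATE$ via Procedure~\ref{alg:decide}. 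With those details filled in, your route is a valid, and arguably more airtight, alternative to the paper's shorter $u_{max}$-centric argument.
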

\end{sloppypar}

\begin{proof}
Consider the candidate with the largest \ID, $u_{max}$, in phase $\numphaserand-1$. We show that for each candidate $v \neq u_{max}$ in this phase, the intersection of referees approached by $v$ and $u_{max}$ is non-zero with high probability. Thus $u_{max}$ will retire each such candidate $v$ with high probability. Notice that $10 \cdot 2^{\numphaserand} \geq \lceil \sqrt{4n \log n} \rceil$ and so each candidate in this phase makes exactly $\lceil  \sqrt{4n \log n} \rceil$ requests.
Then
$$\Prob{v~\mbox{is not retired by}~u_{max}} 
~=~ (1 - 1/n)^{\lceil  \sqrt{4n \log n} \rceil\cdot\lceil  \sqrt{4n \log n} \rceil} ~\leq~ 1/n^4.$$

By Lemma~\ref{lem:limit-num-candidates-per-phase-rand}, there are at most $\lceil 2^{13} \log n \rceil$ candidates participating in phase $\rho$. As the number of candidates participating in subsequent phases cannot increase beyond this value, it acts as an upper bound for the number of candidates participating in phase $\numphaserand-1$.

We see that the probability that any of these candidates is not retired is at most $1/n^3$ through the use of a union bound. Thus, w.h.p. $u_{max}$ retires every other candidate in this phase. Thus only $u_{max}$ ends the phase in candidate state $\candidateSTATE$ w.h.p.
\end{proof}

\begin{lemma}
The message complexity of the algorithm is $O(n)$ messages with high probability.
\end{lemma}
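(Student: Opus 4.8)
The plan is to observe that, up to constant factors, every message the algorithm ever sends can be charged either to a \REQUEST\ message or to the single final $\LEADER$ broadcast, and then to show that only $O(n)$ \REQUEST\ messages are sent with high probability. The high-probability request bound will follow directly from Lemmas~\ref{lem:limit-num-candidates-per-phase-rand} and~\ref{lem:one-candidate-survives-final-phase-rand}; the reduction of all other message types to \REQUEST\ messages is a combinatorial argument about the referee state machine of Procedures~\ref{alg:referee} and~\ref{alg:referee2}.

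First I would bound the number of \REQUEST\ messages, splitting the $\numphaserand$ phases into three ranges. For $1\le i\le\rho$, Lemma~\ref{lem:limit-num-candidates-per-phase-rand} gives at most $\lceil n/4^{i-1}\rceil$ participating candidates w.h.p., each sending $10\cdot 2^i$ requests (the minimum in Procedure~\ref{alg:candidate} being attained by $10\cdot 2^i$ in this range), hence $O(n/2^i)$ requests in that phase; the geometric series sums to $O(n)$. For $\rho<i\le\numphaserand-1$, no retired candidate ever reappears, so the $O(\log n)$ bound on the number of candidates at phase $\rho$ persists; each such candidate sends at most $\lceil\sqrt{4n\log n}\rceil$ requests, and there are only $O(\log\log n)$ such phases, contributing $O(\sqrt n\cdot\mathrm{polylog}\,n)=o(n)$. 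For $i=\numphaserand$, Lemma~\ref{lem:one-candidate-survives-final-phase-rand} says exactly one candidate remains w.h.p., and it sends $n$ requests; on the complementary (low-probability) event the count is still only polynomially bounded, so it does not affect the w.h.p.\ statement. Altogether $O(n)$ \REQUEST\ messages are sent w.h.p.

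Next I would charge the remaining messages. Each \REQUEST\ elicits at most one reply ($\APPROVED$ or $\DECLINED$), possibly delayed while a dispute it triggered is being resolved, so there are at most as many such replies as \REQUEST\ messages. A \DECIDE\ message is emitted by a referee $r$ only in Procedure~\ref{alg:referee} (when an incoming request from a candidate ahead of $r$'s chosen promotes that candidate to the role of \emph{contender}) or in Procedure~\ref{alg:referee2} (when a resolved dispute leaves the chosen still ahead but a fresh dispute must be opened against the current contender). The structural claim I would prove is: a \REQUEST\ promotes its sender to be $r$'s contender at most once, and from that moment $r$ sends at most one \DECIDE\ concerning that candidate before the candidate is either approved (becoming $r$'s chosen) or declined (hence retired); moreover, a candidate that stops being $r$'s contender never becomes $r$'s contender again. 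This injectively charges every \DECIDE\ to a distinct \REQUEST, so the number of \DECIDE\ messages is at most the number of \REQUEST\ messages; and since Procedure~\ref{alg:decide} answers every \DECIDE\ with exactly one $\WINS$/$\LOSES$ message, their number is bounded likewise. Adding the $n-1$ messages of the $\LEADER$ broadcast, the total is $O(n)$ w.h.p.

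The main obstacle is exactly this structural claim, i.e.\ ruling out cascading disputes: a priori the exchange of \DECIDE\ and $\WINS$/$\LOSES$ messages between $r$ and a chosen candidate $v$ might seem to lengthen as $v$ climbs through phases, and since each candidate approaches many referees, a naive accounting would inflate the bound to $\Theta(n\cdot\mathrm{polylog}\,n)$. The resolution is to notice that every \DECIDE\ from $r$ concerns $r$'s \emph{current} contender; that a candidate reaches the contender role at $r$ through exactly one of its own \REQUEST\ messages (a pending contender cannot advance its phase, hence issues no further request to $r$ in the meantime); and that leaving the contender role at $r$ means being approved or declined, neither of which is reversible. Pinning down this case analysis across referee states $C1$, $C2$, and $C3$ in Procedures~\ref{alg:referee} and~\ref{alg:referee2} is the delicate part; the geometric estimate and the bookkeeping for reply messages are routine given the lemmas already in hand.
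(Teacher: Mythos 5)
Your proof is correct, and it rests on the same probabilistic core as the paper's — Lemmas~\ref{lem:limit-num-candidates-per-phase-rand} and~\ref{lem:one-candidate-survives-final-phase-rand} plus the same three-way split of phases ($1\le i\le\rho$, the $O(\log\log n)$ phases up to $\numphaserand-1$, and the final all-to-all phase) — but your deterministic bookkeeping is genuinely different from, and cleaner than, the paper's. The paper charges \emph{all} traffic to candidates phase by phase: in each phase $i$ it counts the $2\cdot 10\cdot 2^i$ request/reply messages plus a double sum over referees accumulated in phases $\le i$ to account for \DECIDE\ exchanges in which the candidate is the \emph{chosen}, and it introduces the extra threshold $\psi$ where the referee-set size saturates at $\lceil\sqrt{4n\log n}\rceil$ to keep those sums under control. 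You instead bound only the \REQUEST\ messages probabilistically and reduce every other message type to them: at most one $\APPROVED$/$\DECLINED$ per \REQUEST, and an injective charge of each \DECIDE\ (hence each $\WINS$/$\LOSES$ reply) to the unique \REQUEST\ that installed its subject as the referee's contender, using the facts that a pending contender cannot advance its phase, that a contender is the subject of at most one \DECIDE\ before being approved or declined, and that leaving the contender role (retired, or promoted to chosen and thereafter handled by the ``$\Chosen=u$'' branch) is irreversible. I checked this state-machine claim against Procedures~\ref{alg:referee} and~\ref{alg:referee2}, including the $C2\to C3$ replacement and the re-dispute $C3\to C2$ transition, and it holds, so the constant-factor reduction to \REQUEST\ messages is sound; together with the single $\LEADER$ broadcast (one winner w.h.p.\ by Lemma~\ref{lem:one-candidate-survives-final-phase-rand}) this gives $O(n)$ w.h.p. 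What your route buys is a tighter conceptual accounting of the dispute traffic (no per-phase accumulation sums, no need for $\psi$); what the paper's route buys is that it never needs the global injectivity argument about the referee state machine, only the local observation that a referee sends at most one \DECIDE\ to a given chosen candidate per adopted phase, at the cost of somewhat looser and more intricate summations.
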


\begin{proof}
\begin{sloppypar}
Let $\psi = \lfloor \log (1/10 \cdot \lceil \sqrt{4n \log n} \rceil) \rfloor$. 
We start by analysing the message complexity that can be ``charged'' to candidates in phases 1 to $\psi$.
In each phase $i \leq \psi $, each candidate $u$ generates and receives up to $2 \cdot 10\cdot 2^i$ messages from its referees. Furthermore, for each of its referees in a given phase $i$ and from all previous phases, a candidate may receive a message to decide candidacy, resulting in at most 
$\sum_{j=1}^i \sum_{k=1}^{j} 2 \cdot 10 \cdot 2^k \leq \sum_{j=1}^i 2 \cdot 10 \cdot 2^{j+1} = 5 \cdot 2^{i+4}$ 
additional messages received and generated by the candidate in this phase. 
(For a candidate $u$ and a referee $r$ chosen by $u$ in phase $i$ or a previous one, only one $\DECIDE$ message will be sent from $r$ to to $u$.) 

In each phase $1 \leq i \leq \rho$, there are at most $\lceil n/4^{i-1} \rceil$ candidates w.h.p. by  Lemma~\ref{lem:limit-num-candidates-per-phase-rand}. As calculated earlier, we have $5 \cdot 2^{i+2} + 5\cdot 2^{i+4}=25 \cdot 2^{i+2}$ messages per candidate in each phase $i$, resulting in a total of $O(n)$ messages w.h.p. generated across all candidates in all phases.

In phases $\rho + 1 \leq i \leq \psi$, there are at most $\lceil 2^{13} \log n \rceil$ candidates in each phase with high probability by Lemma~\ref{lem:limit-num-candidates-per-phase-rand}, each making $10 \cdot 2^i$ requests in that phase. Thus, the total number of messages over these phases is $\sum_{i =  \rho + 1}^{\psi} \lceil 2^{13} \log n \rceil 25 \cdot 2^{i+2} = O(\sqrt{n} \log^{3/2} n)$ w.h.p. 

Now we analyze the message complexity attributed to candidates in phases $\psi$ and above.
In each phase $i > \psi$, each candidate $u$ generates and receives $2 \cdot \lceil \sqrt{4n \log n} \rceil$ messages from its referees. Furthermore, for each of its referees in a given phase $i$ and from all previous phases, $u$ may receive a message to decide candidacy, resulting in at most 
$2\lceil \sqrt{4n \log n} \rceil (i-\psi)$ additional messages due to $\DECIDE$ messages from referees of phases $j>\psi$, plus
$5\cdot 2^{\psi+4}$ additional messages due to $\DECIDE$ messages from referees of phases $j\le\psi$. Thus, at most $2 \cdot (i + 5 - \psi) \cdot\lceil \sqrt{4n \log n} \rceil$  messages are received and generated by $u$ in this phase.

In phases $i > \psi$, there are most $\lceil 2^{13} \log n \rceil$ candidates in each phase w.h.p. by Lemma~\ref{lem:limit-num-candidates-per-phase-rand}. Thus, the total number of messages across all these phases is at most $\sum_{i =  \psi + 1}^{\numphaserand} \lceil 2^{13} \log n \rceil \cdot 2 \cdot (i + 5 - \psi) \cdot\lceil \sqrt{4n \log n} \rceil = O(\sqrt{n} \log^{5/2} n)$ w.h.p.

By Lemma~\ref{lem:one-candidate-survives-final-phase-rand}, exactly one candidate will complete all the first $\numphaserand-1$ phases and remain in candidate state $\candidateSTATE$ w.h.p. This candidate will generate $O(n)$ messages in phase $\numphaserand$.

\end{sloppypar}

Thus, over all phases of the algorithm, there are totally $O(n)$ messages w.h.p.
\end{proof}

\section{A (Tightly) Singularly Optimal Synchronous Algorithm}
\label{sec:adv-wakeup-synch-alg}
In this section, we present a message and time optimal algorithm for the synchronous setting, i.e., an algorithm that takes $O(1)$ time and $O(n)$ messages with high probability. Note that these upper bounds are tight (hence the term ``tightly'' in the section title). Recall that we are still dealing with adversarial wake up, so a node may still not be woken up by the adversary, or may wake up late.

Like in the asynchronous algorithm of Section \ref{sec:adv-wakeup-asynch}, when a node wakes up, it chooses a $\ID$ from $[1,n^4]$ uniformly at random.
Also, nodes participate in the algorithm as candidates or referees. However, here there are two types of candidates, {\em silent} and {\em active}. Specifically, a node $v$ that is woken up spontaneously (by the adversary) becomes a silent candidate.
A silent candidate turns into an active candidate following a successful coin toss, as described later.
A node $v$ can also be woken up by another node $w$ who is an active candidate and who approaches $v$ with an approval request. In that case, $v$ becomes a referee and will not propose its candidacy. Finally, a node $v$ can also be woken up by a $\WINNER(\ID)$ message announcing the rank of a winner, in which case it does nothing.

If a silent candidate $v$ is approached at any time by an active candidate $w$ appointing it as its referee and requesting its approval (or by a $\WINNER(\ID)$ message), then $v$ immediately retires (and acts as a referee if requested).

A node $v$ that is woken up at time $t$ and becomes a silent candidate, makes three attempts to transform into an active candidates. 
Each attempt corresponds to a biased coin toss. If the coin comes up heads, the candidate transforms from a silent candidate into an active one.
At time $t$, $v$ selects itself randomly as an active candidate with probability $n^{-2/3}$. If this attempt fails, then $v$ remains silent for 3 time units and (assuming it was not retired in the meantime by some active candidate) tries again at time $t+3$, this time with probability $n^{-1/3}$.
If this attempt fails as well, then $v$ remains silent for 3 additional time units and (assuming it was not retired) tries again at time $t+6$, this time with probability $1$.

Once a silent candidate $v$ becomes an active candidate, it actively attempts to become the leader. To do so, $v$ chooses u.a.r.\
(uniformly at random)
$\lceil2\sqrt{n} \log n\rceil$ neighbors as referees, and sends them an approval request containing its rank, $\ID_v$. 
For simplicity, $v$ always appoints itself as its own referee in addition to the previously chosen ones.

Note that there may be other active candidates trying to get elected at the same time. Among them, the one with the largest $\ID$ will win the election.\footnote{If nodes have unique identifiers, then in case two active candidates have the same $\ID$, their unique identifiers will be used to break ties. Furthermore, their unique identifiers will be appended to any messages involving their $\ID$ to avoid ambiguity.}
Each referee $w$, approached by a group $C_w$ of active candidates ($|C_w|\ge 1$) at some time $t$, 
selects the largest $\ID_v$ among all $v\in C_w$, and sends a message back to each $v\in C_w$ with this $\ID$. (Note that $w$ itself may be one of the competing active candidates.)

If an active candidate receives its own $\ID$ back from all its referees, then $v$ declares itself a {\em winner} by broadcasting $\WINNER(\ID_v)$.
(Note that $v$ is not guaranteed to be the elected leader, since there may be more than one winner, but w.h.p. there will be a single winner.) 
Otherwise (i.e., if $v$ hears $\ID_{v'}$ from some referee $w$, where $\ID_{v'}>\ID_v$), $v$ retires.

At any time, a node $v'$ receiving one or more $\WINNER(\ID_v)$ messages finds the largest $\ID$ among these winners and sets $leader \gets \ID$. This holds even if $v'$ itself is currently an active candidate and $\ID_{v'}>\ID$. (This can happen if $v'$ became active at a later time than the earliest active candidates, and still did not send its own $\WINNER$ message.)
We establish the following theorem.

\begin{theorem}\label{the:synch-alg-adv-wakeup}
Consider a complete anonymous network $G$ of $n$ nodes in the $\mathcal{CONGEST}$ model, with synchronous communication and adversarial wakeup. There is
a randomized algorithm to solve leader election w.h.p. using $O(n)$ messages w.h.p. in $9$ rounds (deterministically). (If nodes have unique identifiers, then the algorithm always elects a leader.)
\end{theorem}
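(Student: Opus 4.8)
The plan is to establish, in order, the deterministic $9$-round bound (together with correctness), and then the $O(n)$ message bound that holds w.h.p.

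\textbf{Time and correctness.} Normalize time so that the first node wakes at time $0$, and let $\tau$ be the first time at which any node becomes an \emph{active} candidate. I would first argue $\tau\le 6$ \emph{deterministically}: a node woken at time $0$ tosses its three coins at times $0,3,6$, the last succeeding with certainty, so unless it is retired it is active by time $6$; and if it is retired, it is retired by an active candidate that appeared strictly earlier. Now let $v^\star$ be the largest-$\ID$ candidate among those becoming active exactly at time $\tau$ (with unique identifiers, ties broken by them). Every referee of $v^\star$ receives $v^\star$'s request at time $\tau+1$, and the only requests it can have received by then are from candidates that went active at time $\tau$, all of $\ID$ at most $\ID_{v^\star}$; hence each such referee returns $\ID_{v^\star}$. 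So by time $\tau+2$ the candidate $v^\star$ hears its own $\ID$ back from all referees, becomes a winner, and broadcasts $\WINNER(\ID_{v^\star})$, which reaches every node by time $\tau+3\le 9$. Since a node adopting a $\WINNER$ message halts, no candidate that goes active after time $\tau$ can broadcast a competing $\WINNER$ (it receives $v^\star$'s by the time it would finish its own referee round), so the leader, $v^\star$ --- the largest rank among all winners --- is known everywhere by time $9$. For uniqueness of $v^\star$: with unique identifiers it is immediate; in the anonymous case the at most $n$ random ranks in $[1,n^4]$ are distinct with probability $1-O(1/n^2)$.

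\textbf{Counting active candidates.} The heart of the message bound is the dichotomy built into the escalating toss probabilities $n^{-2/3},n^{-1/3},1$. A node that reaches its \emph{second} toss (at time $t+3$, where $t$ is its wake-up time) has not yet received the $\WINNER$ message, so $t+3<\tau+3$, i.e.\ $t<\tau$; and since $\tau$ is the first active time, its first toss (made before $\tau$) failed. Likewise a node reaching its \emph{third} toss was woken before time $\tau-3$ and had its first two tosses fail. Thus, letting $Q$ (resp.\ $Q'$) count the spontaneously woken nodes with wake-up time $<\tau$ (resp.\ $<\tau-3$), all of $Q$ failed their first toss and all of $Q'$ failed their first two; ordering nodes by wake-up time, $\Prob{Q\ge q}\le(1-n^{-2/3})^q$ and $\Prob{Q'\ge q}\le\big((1-n^{-2/3})(1-n^{-1/3})\big)^q$, and these bounds are untouched by the adversary's \emph{adaptive} wake-up schedule because each node's tosses are fresh. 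Choosing $q=\Theta(n^{2/3}\log n)$ and $q=\Theta(n^{1/3}\log n)$ gives $Q=O(n^{2/3}\log n)$ and $Q'=O(n^{1/3}\log n)$ w.h.p. Hence w.h.p.\ the active candidates number $O(n^{1/3})$ from first tosses (stochastically below $\mathrm{Bin}(n,n^{-2/3})$, by a Chernoff bound), at most $\mathrm{Bin}(Q,n^{-1/3})=O(n^{1/3}\log n)$ from second tosses, and at most $Q'=O(n^{1/3}\log n)$ from third tosses --- in total $O(n^{1/3}\log n)$.

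\textbf{Summing up messages, and the obstacle.} Each active candidate sends $\lceil 2\sqrt n\log n\rceil$ requests and receives as many replies, so all request/reply traffic is $O(n^{1/3}\log n)\cdot O(\sqrt n\log n)=o(n)$; and (by the halting rule, plus a birthday-style argument that w.h.p.\ every other active candidate shares a referee with $v^\star$ and therefore retires on hearing $\ID_{v^\star}$) only a single $\WINNER$ broadcast occurs, costing $n$ messages --- $O(n)$ in total w.h.p. The step I expect to be hardest is precisely the active-candidate count: making the above dichotomy quantitative, and in particular certifying that it survives an adversary choosing the wake-up schedule adaptively from the observed coin flips, while also ruling out (w.h.p.) a second simultaneous winner so that the $\WINNER$ broadcast cost stays $O(n)$.
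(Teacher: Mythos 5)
Your proposal is correct (at the same level of rigor as the paper), and while your time/correctness argument is essentially the paper's — the first activation time satisfies $\hht\le t_0+6$, the largest-rank candidate activating at $\hht$ surely wins three rounds later, uniqueness of the winner w.h.p.\ comes from the referee-overlap (birthday) computation, and the halting rule suppresses later winners — your message bound takes a genuinely different route. The paper decomposes the active candidates by activation slot ($\hht$, $\hht+1$, $\hht+2$) and by attempt number, and bounds the populations of nodes woken at exactly $\hht-3$ and $\hht-6$ via a contradiction-with-minimality argument (if more than $n^{7/9}$ nodes had woken at $\hht-3$, an active candidate would have emerged before $\hht$ w.h.p.), obtaining $O(n^{4/9})$ active candidates overall. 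You instead bound the cumulative counts $Q$ and $Q'$ of nodes woken before $\tau$ and before $\tau-3$ directly, using the fact that all of their first (respectively, first two) tosses must have failed, which gives geometric tails such as $\Prob{Q\ge q}\le(1-n^{-2/3})^q$ that you correctly note are unaffected by the adaptive wake-up schedule because each toss is fresh; this yields $Q=O(n^{2/3}\log n)$, $Q'=O(n^{1/3}\log n)$, hence $O(n^{1/3}\log n)$ active candidates, $o(n)$ election traffic, and one $O(n)$ \WINNER{} broadcast w.h.p. The two arguments quantify the same dichotomy created by the escalating probabilities $n^{-2/3},n^{-1/3},1$, but yours buys a unified treatment: it makes robustness to adaptivity explicit, it does not rely on second/third attempts aligning with the exact slots $\hht-3,\hht-6$, and it covers the paper's three sets $Z,Z',Z''$ in one pass rather than repeating the analysis per slot; the paper's slotted contradiction argument, in exchange, avoids the mild conditioning-on-$\hht$ bookkeeping implicit in your binomial domination step (a subtlety present in the paper's own expectation computations as well) and gives a slightly smaller, log-free count. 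Both bounds are far below the $O(\sqrt{n}/\log n)$ threshold needed, so either suffices for $O(n)$ messages w.h.p.
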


\begin{proof}
We first argue about the time and correctness. We then prove that the bound on message complexity holds w.h.p.

\begin{lemma}
The algorithm elects exactly one leader w.h.p. for an anonymous network and always elects exactly one leader when nodes have unique identifiers. Furthermore, the algorithm terminates within at most 9 time units from the time the first node is woken up.
\end{lemma}
\begin{proof}
Suppose the adversary wakes up the first node at time $t_0$. Note that the adversary may wake up more than one node at time $t_0$, as well as in any later time $t>t_0$.
However, no messages are sent until the first time $\hht$ in which some silent candidate succeeds in becoming active. 

Observe that $\hht\le t_0+6$, since (i) from time $t_0$ and after there is at least one silent candidate, (ii) a silent candidate that is not retired will necessarily become active at most 6 time units after waking up, by which time its success probability is increased to 1, and (iii) a silent candidate can only be made to retire (directly or indirectly) on some time $t$ by an active candidate who necessarily became active earlier than $t$. 

Observe that once at least one active candidate exists, a leader will be elected for sure within at most 3 additional time units. In particular, denoting by $Z$ the set of silent candidates that became active at time $\hht$, and letting $z$ be the highest $\ID$ node among them, $z$ will surely announce itself a leader. 

Note that messages will be sent out only at times $\hht$, $\hht+1$ and $\hht+2$. Specifically, at time $\hht$, all nodes in $Z$ will send their requests to their referees, the referee replies will be sent in the following time, and at time $\hht+2$ each elected leader will send out its announcement. 

Note that with high probability, $z$ will be the only winner. To see this, observe that each active candidate chooses some $\lceil 2\sqrt{n}\log n\rceil$ referees u.a.r. For each other active candidate $v_i\in Z$, define a bad event $E_i$ occurring when there is no overlap between $v_i$'s and $z$'s referees. If $E_i$ occurs, $v_i$ could possibly become a winner. However, 
$$Prob[E_i]
~=~ (1 - 2\sqrt{n} \log n/n)^{2\sqrt{n} \log n} 
~=~ (1 - 2\log n/\sqrt{n})^{2\sqrt{n} \log n} ~\leq~ e^{- 2\log n} ~=~ 1/n^2.$$
As there are at most $n-1$ active candidates in $Z$, by the union bound, the probability that at least one bad event occurs is at most $(n - 1)/n^2 \leq 1/n$. Thus, w.h.p.\ no bad event occurs and only $z$ becomes a winner.

In the unlikely case when multiple candidates become winners and broadcast their rank at time $\hht+2$, if nodes have unique identifiers, then exactly one of the winners will become the elected leader of everyone in the network, since node $\ID$s would then be distinct.

Note that some silent candidate $q$ may become active at time $\hht+1$ or $\hht+2$ and approach its referees. Such a candidate will not win over $z$ even if $\ID_q>\ID_z$, since it will get the message $\WINNER(\ID_z)$ one or two time units before sending out its own announcement.  
\end{proof}

\begin{lemma}
The message complexity of the algorithm is $O(n)$ with high probability.
\end{lemma}

\begin{proof}
As mentioned earlier, throughout the execution, no messages are sent except at times $\hht$, $\hht+1$ and $\hht+2$. 

Each winner generates $O(n)$ $\WINNER$ messages in its final broadcast. 
As argued earlier, with high probability, at most one node will send $\WINNER$ messages, so the contribution of such messages to the total message complexity is $O(n)$. 

Each of the candidates in $Z$ is responsible for $O(\sqrt{n}\log n)$ ``election messages'' between itself and its referees. 
In addition, such messages may be sent by candidates who become active at time $\hht+1$ or $\hht+2$ (although these candidates will not become winners).
Denote the sets of these late-coming active candidates by $Z'$ and $Z''$ respectively.
Hence to bound the number of messages, it suffices to bound $|Z| + |Z'| + |Z''|$. 

Let us first bound $|Z|$.
Partition the set $Z$ into three subsets, $Z=\bigcup_{\ell=1}^3 Z_\ell$, where 
$Z_1$ consists of nodes that woke up at time $\hht$ and succeeded in becoming active candidates on their first attempt, 
$Z_2$ consists of nodes that woke up at time $\hht-3$ and succeeded in becoming active candidates at time $\hht$ on their second attempt, and
$Z_3$ consists of nodes that woke up at time $\hht-6$ and succeeded on their third attempt. 
For $\ell=1,2,3$, denote by $W_\ell$ the set of silent candidates awoken by the adversary at time $\hht-3(\ell-1)$, and let $n_\ell=|W_\ell|$.
Note that $\Exp{|Z_\ell|} = n_\ell \cdot n^{\ell/3-1}$.

Consider first the active candidates of $Z_1$ (i.e., that woke up at time $\hht$).
Since $n_1\le n$, their expected number is
$$\Exp{|Z_1|} ~=~ n_1 \cdot n^{-2/3} ~\le~ n^{1/3}.$$
Therefore, with high probability, the actual number of active candidates in $Z_1$
satisfies $|Z_1| ~\le~ 6 n^{1/3}$. (To prove this, consider the third inequality from Theorem 4.4 in~\cite{MU17}, i.e., $Prob(X\ge R) < 2^{-R}$ for $R\ge 6\Exp{X}$, where $X$ is the sum of independent Poisson trials. Setting $R = 6 n^{1/3}$, we get the desired high probability bound.)

Next, consider candidates from $Z_2$.
Note that with high probability,
\begin{equation}
\label{eq: f-1}
n_2 \le n^{7/9}.
\end{equation}
To see why this holds, suppose towards contradiction that $n_2 > n^{7/9}$.
At time $\hht-3$, each of the $n_2$ silent candidates in $W_2$ attempted to become active with probability $n^{-2/3}$. Therefore, the expected number of silent candidates from $W_2$ that should have become active at time $\hht-3$ was
$$n_2 \cdot n^{-2/3} ~>~ n^{7/9} \cdot n^{-2/3}
~=~ n^{1/9}.$$
This implies that with high probability, there should have been at least one actual active candidate from $W_2$ at time $\hht-3$, contradicting the definition of $\hht$.

By Eq. \eqref{eq: f-1}, the expected number of active candidates in $Z_2$ is thus
$$\Exp{|Z_2|} ~=~ n_2 \cdot n^{-1/3} ~\le~ n^{7/9} \cdot n^{-1/3} ~\le~ n^{4/9}.$$
Therefore, with high probability, 
$|Z_2| ~\le~ 6 n^{4/9}$.

Finally, consider the active candidates in $Z_3$.
By arguments similar to those used to show Eq. \eqref{eq: f-1}, we have that with high probability, $n_3 \le n^{4/9}$.
It follows that the expected number of active candidates in $Z_3$ is
$$\Exp{|Z_3|} ~=~ n_3\cdot 1 ~\le~ n^{4/9}.$$
Therefore, with high probability, $|Z_3| ~\le~ 6 n^{4/9}$.

In summary, $|Z|=|Z_1|+|Z_2|+|Z_3| \le 6 n^{1/3} + 6 n^{4/9} + 6 n^{4/9} \le 18 n^{4/9}$ with high probability. 

We now turn to bounding $|Z'|$. This is done in a very similar manner, by partitioning the set $Z'$ into three subsets, $Z'=\bigcup_{\ell=1}^3 Z'_\ell$, where 
$Z'_1$ consists of nodes that woke up at time $\hht+1$ and succeeded in becoming active candidates on their first attempt, 
$Z'_2$ consists of nodes that woke up at time $\hht-2$ and succeeded in becoming active candidates at time $\hht+1$ on their second attempt, and
$Z'_3$ consists of nodes that woke up at time $\hht-5$ and succeeded on their third attempt. 
A similar analysis reveals that $Z'_1$ is no larger than $6n^{1/3}$ with high probability, and each of the other two sets, $Z'_2$ and $Z'_3$, must be no larger than $6n^{4/9}$ with high probability, since otherwise an active candidate would have emerged w.h.p at a time earlier than $\hht+1$.
It follows that $|Z'| \le 18 n^{4/9}$ with high probability. The same analysis yields that also $|Z''| \le 18 n^{4/9}$ with high probability.

It follows that the total number of election messages due to the active candidates in these sets is at most 
$(|Z| + |Z'| + |Z''|)\cdot O(\sqrt{n}\log n) ~\le~ O(n^{4/9}) \cdot O(\sqrt{n}\log n)
~\le~ O(n)$, for sufficiently large $n$.
\end{proof}

This completes the proof of Theorem \ref{the:synch-alg-adv-wakeup}.
\end{proof}

\section{Conclusion}
\label{sec:conc}
In the asynchronous setting, no singularly optimal {\em deterministic} leader election algorithm is known to exist. In contrast, we have shown that using randomization, a singularly optimal algorithm (whose message complexity is asymptotically optimal and whose time complexity is optimal up to a polylogarithmic factor)  can be obtained.

One open question is whether we can improve the time complexity of our message-optimal ($O(n)$) randomized asynchronous algorithm from the current $O(\log^2 n)$ time, to say, $O(\log n)$ time. Also, can we obtain such a (almost) singularly optimal {\em deterministic} algorithm?

Another important question is to find out whether (and when) one can construct time and message-efficient asynchronous algorithms also for {\em general} graphs, in particular algorithms that are (essentially) singularly optimal. In general graphs, that would mean algorithms with
${O}(m)$ (or, at least, $\tilde{O}(m)$) messages and $\tilde{O}(D)$ (or, at least, $\tilde{O}(D)$) 
time. This was shown for the case of synchronous networks in \cite{KPPRT15jacm}.

For synchronous complete networks, an essentially (up to a logarithmic factor) singularly optimal deterministic algorithm was known. We obtained a tightly singularly optimal randomized one (that is, with no extra logarithmic factors).

\bibliographystyle{plainurl}
\bibliography{references}

\end{document}